\documentclass[journal, headsepline]{IEEEtran}

\usepackage{amsmath,amsfonts,amssymb,amsthm}
\usepackage{mathrsfs}
\usepackage{comment,blkarray}
\usepackage{multirow,bigdelim}
\usepackage{cite}
\usepackage[ruled,commentsnumbered, vlined]{algorithm2e}
\usepackage{tikz}
\usetikzlibrary{arrows,automata}
\usepackage[latin1]{inputenc}
\usepackage{verbatim}
\usepackage{graphicx}
\usepackage{cases}
\usepackage{booktabs}
\usepackage{caption}												
\usepackage{subcaption}	
\usepackage{etoolbox}

\usepackage{enumitem}
\usepackage{mathtools}

\makeatletter

\newtheorem{thm}{Theorem}
\newtheorem{lemma}{Lemma}
\newtheorem{cor}{Corollary}

\newtheorem{prop}{Proposition}
\theoremstyle{definition}
\newtheorem{example}{Example}
\allowdisplaybreaks

\begin{document}

\title{Robust Multidimensional Chinese Remainder Theorem (MD-CRT) with Non-Diagonal Moduli and Multi-Stage Framework}

\author{Guangpu Guo and Xiang-Gen Xia, \IEEEmembership{Fellow}, \IEEEmembership{IEEE} 
       
\thanks{G. Guo and X.-G. Xia are with the Department of Electrical and Computer Engineering,
  University of Delaware, Newark, DE 19716, USA
  (e-mails: guangpu@udel.edu and xxia@ee.udel.edu).
  This work was supported by the National Science Foundation (NSF) under Grant CCF-2246917.
}
}

\maketitle

\begin{abstract}
    The Chinese remainder theorem (CRT) provides an efficient way to reconstruct an integer from its remainders modulo several integer moduli, and has been widely applied in signal processing and information theory. Its multidimensional extension (MD-CRT) generalizes this principle to integer vectors and integer matrix moduli, enabling reconstruction in multidimensional signal processing scenarios. However, since matrices are generally non-commutative, the multidimensional extension introduces new theoretical and algorithmic challenges.
    When all matrix moduli are diagonal, the system is equivalent to applying the one-dimensional CRT independently along each dimension.
    This work first investigates whether non-diagonal (non-separable) moduli offer fundamental advantages over traditional diagonal ones.
    We show that under the same determinant constraint, non-diagonal matrices do not increase the dynamic range but yield more balanced and better-conditioned sampling patterns. More importantly, they generate lattices with longer shortest vectors, leading to higher robustness to vector remainder errors, compared to diagonal ones. To further improve the robustness, we develop a multi-stage robust MD-CRT framework that improves the robustness level without reducing the dynamic range.
    Due to the multidimensional nature and modulo matrix forms, it is challenging and not straightforward to extend the existing one-dimensional multi-stage robust CRT. In this paper, we obtain a new condition for matrix moduli, which can be easily checked, such that a multi-stage robust MD-CRT can be implemented.
    Both theoretical analysis and simulation results demonstrate that the proposed multi-stage robust MD-CRT achieves stronger error tolerance and more reliable reconstruction under erroneous vector remainders than that of single-stage robust MD-CRT.
\end{abstract}

\begin{IEEEkeywords}
Multidimensional Chinese remainder theorem (MD-CRT), dynamic range, robustness, multi-stage, non-separable sampling, shortest vectors.
\end{IEEEkeywords}

\IEEEpeerreviewmaketitle

\section{Introduction}\label{s1}
The Chinese remainder theorem (CRT) \cite{crt1} is a classical result in number theory. It provides an efficient way to reconstruct a large integer from its remainders modulo several smaller positive integers, called integer moduli.
As a powerful tool for recovering large or high-resolution quantities from smaller or folded measurements, the CRT has been widely used in many areas, including signal processing, coding theory, and cryptography \cite{crt1, crt2, xia99, gangli1, fft2, wenchaoli, radeee, xiaoli, congling1, congling2, lugan2,Akhlaq2015,Campobello2012,Chessa2012}. 
However, the traditional CRT is highly sensitive to errors. A small error in one remainder may lead to a large reconstruction error, making it unreliable in noisy environments. To address this issue, the robust CRT was introduced in \cite{xia07,li09,wang10,xiao2014,xiao17}. Its main idea is to have some redundancy among the moduli, such as to have a common integer factor among all the moduli. Then, when each remainder error is bounded within a certain range, the reconstruction error of the unknown integer can also be bounded within the same range. Similar ideas have also been extended to several generalized forms of the CRT \cite{li16,xhs18,xiao23a} with the same goal of achieving reliable reconstruction under small remainder errors.

In many modern applications, the quantities of interest are multidimensional vectors rather than scalars. Problems in imaging, radar, and communications often involve data with several dimensions. Motivated by this need, the CRT has been extended to the multidimensional CRT (MD-CRT) \cite{PPV1,MD1,MD2,gMDCRT}, where the integer moduli are replaced by integer matrices, called matrix moduli, and the remainders become vectors, called vector remainders. The MD-CRT provides a way to reconstruct an integer vector from its vector remainders modulo several integer matrix moduli. Similar to the one-dimensional case, the robustness issue also arises in the multidimensional setting. To handle this issue, the robust MD-CRT was proposed in \cite{MD1,MD2}, following the same basic idea as the robust CRT, i.e., to have some redundancy among the matrix moduli.

When all the matrix moduli are diagonal, the MD-CRT is equivalent to applying the one-dimensional CRT independently to each dimension. This naturally raises an interesting question: can non-diagonal (non-separable) matrices offer any fundamental advantage over diagonal (separable) ones, and if so, in what sense?
Although some existing works have studied general matrix moduli~\cite{MD1,MD2,gMDCRT}, they mainly focused on algorithmic formulations and reconstruction conditions rather than comparing their intrinsic performance with diagonal ones. In particular, whether such non-diagonal matrices can bring inherent benefits has not been systematically investigated. 

In \cite{guo25}, we briefly discussed one potential benefit from a sampling perspective. When the matrix moduli are viewed as sampling matrices, non-diagonal matrices can yield more uniform sampling patterns across different dimensions, helping avoid ill-conditioning in practical implementations. Moreover, as a special case of the two-dimensional CRT, the complex CRT (C-CRT) for complex-valued signals has been successfully applied to self-reset analog-to-digital converters (SR-ADCs), achieving better performance than traditional SR-ADCs using two separate real-valued 1D CRTs \cite{xiaopingli}. These observations suggest that non-diagonal matrices may have deeper theoretical advantages beyond implementation convenience.

In this paper, we study this question systematically from two key aspects: the dynamic range and the robustness.  
These two aspects are fundamental to the MD-CRT, as they jointly determine
the reconstruction capability and reliability of the system.  
However, since matrices are generally non-commutative, analyzing the MD-CRT is significantly more challenging than the conventional 1D-CRT.

The dynamic range of the MD-CRT represents the total number of uniquely determinable vectors for a given set of matrix moduli, or equivalently, the volume of the uniquely determinable range. 
Mathematically, it equals to the absolute determinant value of the least common right multiple (lcrm) of all the matrix moduli.  
When the MD-CRT is used in multidimensional sub-Nyquist sampling, each matrix modulus acts as a sampling matrix, and its absolute determinant corresponds to the sampling rate.  
Because of hardware constraints, all sampling rates are limited, i.e., the absolute determinant values of all the matrix moduli are upper bounded.  
Under the same determinant constraint, non-diagonal matrices naturally form a much larger candidate set than that of diagonal ones.  
This raises the question: can non-diagonal matrices achieve a larger dynamic range under the same determinant constraint?

In this paper, we first characterize the maximum dynamic range of the MD-CRT under a fixed determinant constraint. We show that this maximum dynamic range can be achieved by certain diagonal matrices. 
However, the number of such diagonal matrices is limited. When used as sampling matrices, their sampling rates are concentrated in one dimension, leading to unbalanced and ill-conditioned sampling patterns. 
In contrast, by using our previously proposed construction of pairwise co-prime non-diagonal matrices~\cite{guo25,primematrix}, we can obtain a set of non-diagonal matrices that preserve the same number of moduli and the same determinant for each matrix while producing more balanced and uniform sampling patterns across all dimensions.

In addition to the dynamic range, another key performance measure of the MD-CRT is its robustness level. 
We use the vector remainder error bound denoted by $\tau$ to quantify robustness, which means that if every vector remainder error is bounded by $\tau$, then the reconstruction error of the recovered vector is also bounded by $\tau$.  
Therefore, a larger $\tau$ implies stronger robustness and higher noise tolerance.
Following~\cite{MD2}, the value of $\tau$ is obtained from a
sufficient condition for robust multidimensional reconstruction. This bound
depends on the length of the shortest nonzero vector of certain lattices
generated by pairwise greatest common left divisors (gclds) of the moduli. As a sufficient bound, it does not claim the true maximum error tolerance. In practice, recovery may still succeed when errors slightly exceed $\tau$. Our simulations indicate that the gap between the sufficient bound and the empirical threshold is small, so the vector remainder error bound $\tau$ from \cite{MD2} serves as a practical and reliable indicator for comparing robustness across different sets of matrix moduli.

Based on the above definition of robustness, we compare diagonal and non-diagonal moduli under a fixed determinant constraint.
To solely investigate the robustness effect, we adopt the standard left-factor model: given a set of pairwise co-prime matrix moduli $\Gamma_1,\dots,\Gamma_L$, we study $\mathbf{M}\Gamma_1,\dots,\mathbf{M}\Gamma_L$, for which the sufficient bound $\tau$ is determined by the shortest nonzero vector of the lattice generated by~$\mathbf{M}$.
The problem then becomes: among all integer matrices $\mathbf{M}$ with $|\det(\mathbf{M})| \leq p$, can a non-diagonal matrix $\mathbf{M}$ generate a lattice whose shortest vector is longer than that of any diagonal matrix $\mathbf{M}$?

To make the analysis tractable, we focus on the two-dimensional case under a determinant constraint $p$, where $p$ is a prime integer. We show that the best diagonal matrix achieves a shortest-vector length $\lfloor\sqrt p\rfloor$, while certain non-diagonal matrices achieve a strictly larger value. We also develop a simple search procedure to identify such non-diagonal matrices and verify the improvement through simulations for all prime determinants below $10^5$. These results demonstrate that, although non-diagonal matrices do not increase the dynamic range, they can indeed provide a clear robustness advantage in the MD-CRT over that of diagonal ones. 

This difference between dynamic range and robustness comes from their basic nature.
The dynamic range is determined by the determinant of the lcrm of matrix moduli, which reflects the algebraic structure of the integer matrix ring.
Although non-diagonal matrices provide a larger candidate set, the maximum dynamic range is still limited by this algebraic relation.
In contrast, the robustness depends on the geometry of the lattices generated by each matrix modulus.
It is a local geometric property related to the lattice shape.
With more possible shapes to choose from, non-diagonal moduli can form lattices with longer shortest vectors, and therefore achieve stronger robustness than that diagonal moduli can.

In the second part of this paper, we further improve the robustness of the MD-CRT. To do so, there are two possible approaches. One is to build a multi-stage framework as that for 1D-CRT proposed in \cite{xiao2014}, and the other is to trade dynamic range for stronger robustness as that proposed for 1D-CRT in \cite{xiao17}. Although the two approaches are similar to that for 1D-CRT in \cite{xiao2014,xiao17}, they become much more challenging in multidimensional settings due to the matrix forms of the moduli.

Based on the algorithm in~\cite{MD2}, referred to as the single-stage robust MD-CRT, we propose a multi-stage robust MD-CRT framework.
The main idea in \cite{xiao2014} is to divide all matrix moduli into groups, apply the single-stage algorithm to each group, and then treat the reconstructed results as new moduli and remainders for the next stage.
The key challenge and difference compared to that for 1D-CRT in \cite{xiao2014} is that, in the multidimensional case,
the robustly determinable range of the robust MD-CRT is not always consistent with the uniquely determinable range of the MD-CRT for the same set of matrix moduli.
In fact, it cannot even be represented as a fundamental parallelepiped (FPD) in general.
This inconsistency makes the multi-stage framework difficult to implement.

To address this problem, we analyze the robustly determinable range of the single-stage robust MD-CRT and derive a sufficient condition under which it becomes consistent with the MD-CRT.
The obtained sufficient condition is easy to check and thus can be easily utilized in designing matrix moduli as we shall see later.
Based on this condition, the matrix moduli can be partitioned into multiple groups, where all moduli within each group satisfy this condition, and multi-stage reconstruction is then performed across these groups.
Our proposed multi-stage framework can further improve robustness without reducing the dynamic range.
Specifically, some matrix moduli that have no robustness in the single-stage case become having robustness in the multi-stage version, and for those already having robustness, the vector remainder error bound~$\tau$ can be further increased.
Simulation results confirm these findings.

The rest of this paper is organized as follows. Section~\ref{s2} reviews the necessary preliminaries. Section~\ref{s3} analyzes the dynamic ranges of diagonal and non-diagonal matrices. Section~\ref{s4} investigates the robustness advantage of non-diagonal matrices. Section~\ref{s5} presents the proposed multi-stage robust MD-CRT framework. Simulation results are provided in Section~\ref{s6}, and conclusions are given in Section~\ref{s7}.

\section{Preliminaries}\label{s2}
In this section, we briefly review the basic concepts and properties of integer lattices and matrices that are essential for analyzing the MD-CRT. For additional background on integer matrices and related results, the reader may refer to~\cite{MD2,matrix,remainder}.

Throughout this paper, $\mathbb{Z}$ denotes the set of integers and $\mathbb{R}$ denotes the set of real numbers.  
All vectors (e.g., $\mathbf{n}$, $\mathbf{f}$, $\mathbf{r}$) are $D$-dimensional integer vectors, and all matrices (e.g., $\mathbf{M}$, $\mathbf{N}$, $\mathbf{P}$) are $D\times D$ integer matrices, unless stated otherwise.  
We write $\mathbf{I}$ for the $D\times D$ identity matrix and $\mathbf{0}$ for the zero matrix or vector.  
The determinant of a matrix $\mathbf{M}$ is denoted by $\det(\mathbf{M})$. 
The operator $\text{diag}(\cdot)$ represents a diagonal matrix, and $|\mathcal{S}|$ denotes the cardinality of a set $\mathcal{S}$. Unless otherwise stated, the robustness statements and theorems in this paper hold for \emph{any} vector norm~$\|\cdot\|$.
For numerical evaluations and when computing shortest vectors (e.g., $\lambda_{\mathcal{L}(\mathbf{M})}$), we take $\|\cdot\|$ to be the Euclidean norm~$\|\cdot\|_2$. The notation $\left\lfloor \,\cdot\, \right\rfloor$ denotes the floor operation, applied element-wise when the argument is a vector.

\begin{enumerate}[leftmargin=1.3em]
  \item \textbf{Lattice:}  
  Given a nonsingular integer matrix $\mathbf{M} \in \mathbb{Z}^{D\times D}$, the lattice generated by $\mathbf{M}$ is defined as
  \[
  \mathcal{L}(\mathbf{M}) = \{ \mathbf{M}\mathbf{n} \mid \mathbf{n} \in \mathbb{Z}^D \}.
  \]

  \item \textbf{Shortest vector problem (SVP):}  
  The minimum distance of $\mathcal{L}(\mathbf{M})$ is
  \[
  \lambda_{\mathcal{L}(\mathbf{M})} = \min_{\mathbf{v}\in \mathcal{L}(\mathbf{M})\setminus\{\mathbf{0}\}} \|\mathbf{v}\|,
  \]
  which corresponds to the length of the shortest nonzero lattice vector.

  \item \textbf{Fundamental parallelepiped (FPD):}  
  For a nonsingular integer matrix $\mathbf{M}\in\mathbb{Z}^{D\times D}$, its FPD is defined as 
  \[
  \mathcal{N}(\mathbf{M}) = \{\mathbf{k}\in\mathbb{Z}^D \mid \mathbf{k} = \mathbf{M}\mathbf{x},~\mathbf{x}\in[0,1)^D\},
  \]
  and the number of elements in $\mathcal{N}(\mathbf{M})$ equals $|\det(\mathbf{M})|$.

  \item \textbf{Division representation for integer vectors:}  
  Any integer vector $\mathbf{f}\in\mathbb{Z}^D$ can be uniquely expressed as
  \[
  \mathbf{f} = \mathbf{M}\mathbf{n} + \mathbf{r},\quad \mathbf{r}\in\mathcal{N}(\mathbf{M}),~\mathbf{n}\in\mathbb{Z}^D,
  \]
  which is denoted as $\mathbf{f} \equiv \mathbf{r} \mod \mathbf{M}$, where $\mathbf{r}$ is called the \emph{vector remainder} of $\mathbf{f}$ modulo $\mathbf{M}$. 

  \item \textbf{Unimodular matrix:}  
  A square integer matrix $\mathbf{U}$ is unimodular if $|\det(\mathbf{U})|=1$. 

  \item \textbf{Greatest common left divisor (gcld):} For two integer matrices $\mathbf{M}$ and $\mathbf{N}$, a nonsingular integer matrix $\mathbf{G}$ is a \emph{common left divisor} (cld) of $\mathbf{M}$ and $\mathbf{N}$ if $\mathbf{G}^{-1}\mathbf{M}$ and $\mathbf{G}^{-1}\mathbf{N}$ are both integer matrices. Moreover, if any other cld of $\mathbf{M}$ and $\mathbf{N}$ is also a left divisor of $\mathbf{G}$, then $\mathbf{G}$ is called their \emph{greatest common left divisor} (gcld).

  \item \textbf{Least common right multiple (lcrm):} For two integer matrices $\mathbf{M}$ and $\mathbf{N}$, a matrix $\mathbf{R}$ is a \emph{common right multiple} (crm) if there exist integer matrices $\mathbf{P}$ and $\mathbf{Q}$ such that $\mathbf{R}=\mathbf{M}\mathbf{P}=\mathbf{N}\mathbf{Q}$. If any other crm of $\mathbf{M}$ and $\mathbf{N}$ is also a right multiple of $\mathbf{R}$, then $\mathbf{R}$ is called their \emph{least common right multiple} (lcrm). It is worth noting that the lcrm is not unique: if $\mathbf{R}$ is an lcrm of $\mathbf{M}$ and $\mathbf{N}$, then $\mathbf{R}\mathbf{U}$ is also an lcrm for any unimodular matrix $\mathbf{U}$. However, the absolute determinant value of the lcrm, $|\det(\mathbf{R})|$, is unique.

  \item \textbf{Smith normal form (SNF):}  
  Any integer matrix $\mathbf{M}\in\mathbb{Z}^{D\times K}$ can be factorized as
  \[
  \mathbf{U}\mathbf{M}\mathbf{V} = \mathbf{\Lambda},
  \]
  where $\mathbf{U} \in \mathbb{Z}^{D\times D}$ and $\mathbf{V} \in \mathbb{Z}^{K\times K}$ are unimodular matrices, and $\mathbf{\Lambda}$ is diagonal with entries $\delta_1,\dots,\delta_{\min\{K,D\}}$ satisfying $\delta_i\mid\delta_{i+1}$, i.e., $\delta_i$ divides $\delta_{i+1}$.

  \item \textbf{Hermite normal form (HNF):}  
  Any integer matrix $\mathbf{M}\in\mathbb{Z}^{D\times D}$ can be uniquely decomposed as
  \[
  \mathbf{M} = \mathbf{H}\mathbf{U},
  \]
  where $\mathbf{U} \in \mathbb{Z}^{D\times D}$ is unimodular and $\mathbf{H} \in \mathbb{Z}^{D\times D}$ is a lower-triangular integer matrix satisfying $0 \le H_{i,j} < H_{i,i}$ for $j < i$.

  \item \textbf{Coprimality:}  
  Matrices $\mathbf{M}$ and $\mathbf{N}$ are said to be left co-prime if their gcld is a unimodular matrix. Equivalently, $\mathbf{M}$ and $\mathbf{N}$ are left co-prime if and only if the SNF of matrix $(\mathbf{M} \ \mathbf{N})$ equals $(\mathbf{I} \ \mathbf{0})$. Throughout this paper, ``co-prime'' always refers to the left co-prime case.
\end{enumerate}

\section{Dynamic Range Analysis of MD-CRT}\label{s3}

In this section, we introduce the multidimensional undersampling model, and for MD-CRT, analyze the maximum dynamic range under a fixed determinant constraint. We then compare the performance of diagonal and non-diagonal matrix moduli.  
Our analysis shows that although non-diagonal matrices do not have a strictly larger dynamic range, they yield more balanced and better-conditioned sampling patterns than diagonal ones.  

\subsection{Multidimensional Undersampling and MD-CRT}

Consider a $D$-dimensional harmonic signal
\begin{equation}\label{eq:md_signal}
x(\mathbf{t}) = a\, e^{j 2\pi \mathbf{f}^{\top} \mathbf{t}} + \omega(\mathbf{t}),
\quad \mathbf{t}\in\mathbb{R}^{D},
\end{equation}
where $a$ is an unknown amplitude, $\mathbf{f}\in\mathbb{Z}^{D}$ is a $D$-dimensional integer frequency vector, and $\omega(\mathbf{t})$ denotes additive noise.
It is to determine $\mathbf{f}=[N_1,N_2,\ldots,N_D]^{\top}$ from several undersampled $D$-dimensional signals of $x(\mathbf{t})$, where each $N_i$ is a positive integer and may be large.

Let $\mathbf{M}_1,\mathbf{M}_2,\ldots,\mathbf{M}_L$ be $L$ nonsingular integer matrices of size $D\times D$, called \emph{sampling matrices}.
Sampling with $\mathbf{M}_i$ gives
\begin{equation}\label{eq:md_sample}
x_i[\mathbf{n}]
= a\, e^{j 2\pi \mathbf{f}^{\top}\mathbf{M}_i^{-\top}\mathbf{n}}
+ \omega[\mathbf{M}_i^{-\top}\mathbf{n}],
\quad \mathbf{n}\in\mathbb{Z}^{D},
\end{equation}
where the sampling density of $\mathbf{M}_i$ equals $|\det(\mathbf{M}_i)|$, i.e., there are $|\det (\mathbf{M}_i)|$ many sampled points per unit spatial volume of $\mathbb{R}^{D}$.

Applying the multidimensional DFT (MD-DFT) \cite{ar} to $x_i[\mathbf{n}]$ over $\mathbf{n}\in\mathcal{N}(\mathbf{M}_i^{\top})$, yields
\begin{equation}\label{eq:mddft}
X_i(\mathbf{k})
= a\,|\det(\mathbf{M}_i)|\,
\delta(\mathbf{k}-\mathbf{r}_i)
+ \Omega_i(\mathbf{k}), \quad \mathbf{k}\in\mathcal{N}(\mathbf{M}_i),
\end{equation}
where $\mathbf{r}_i$ is the integer vector remainder of $\mathbf{f}$ modulo $\mathbf{M}_i$, and $\delta(\cdot)$ denotes the discrete delta function.
Hence, each sampler provides one vector remainder $\mathbf{r}_i\equiv \mathbf{f}\bmod\mathbf{M}_i$. The MD-CRT enables reconstruction of $\mathbf{f}$ from these vector remainders.

\begin{prop}[MD-CRT~\cite{MD1}]\label{prop:MD-CRT}
Let $\mathbf{M}_1,\mathbf{M}_2,\ldots,\mathbf{M}_L$ be arbitrary nonsingular integer matrices of size $D\times D$, and let $\mathbf{R}$ be any lcrm of them.
For any integer vector $\mathbf{f}\in\mathbb{Z}^{D}$, it can be uniquely determined from its vector remainders $\mathbf{r}_i\equiv\mathbf{f}\bmod\mathbf{M}_i$, $1\le i\le L$, if and only if $\mathbf{f}\in\mathcal{N}(\mathbf{R})$.
\end{prop}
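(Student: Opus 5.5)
The plan is to identify the set of vectors that share all remainders with a given $\mathbf{f}$, and then compare it with the lattice $\mathcal{L}(\mathbf{R})$. The first step is to show that
\[
\mathbf{f}\equiv\mathbf{f}'\bmod\mathbf{M}_i \text{ for all } i \iff \mathbf{f}-\mathbf{f}'\in\bigcap_{i=1}^L\mathcal{L}(\mathbf{M}_i).
\]
This is immediate from the uniqueness of the division representation: $\mathbf{f}$ and $\mathbf{f}'$ have the same remainder modulo $\mathbf{M}_i$ exactly when $\mathbf{f}-\mathbf{f}'=\mathbf{M}_i\mathbf{n}$ for some integer $\mathbf{n}$.

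The second step, which is the key algebraic one, is to show $\bigcap_i\mathcal{L}(\mathbf{M}_i)=\mathcal{L}(\mathbf{R})$ for any lcrm $\mathbf{R}$.
\begin{itemize}
\item The inclusion $\supseteq$ holds because $\mathbf{R}=\mathbf{M}_i\mathbf{P}_i$ with each $\mathbf{P}_i$ integer.
\item For $\subseteq$, note that the intersection of full-rank lattices in $\mathbb{Z}^D$ is again a full-rank lattice. It contains $\mathcal{L}(\prod_i|\det\mathbf{M}_i|\,\mathbf{I})$, so it equals $\mathcal{L}(\mathbf{S})$ for some nonsingular integer $\mathbf{S}$.
\item Since $\mathcal{L}(\mathbf{S})\subseteq\mathcal{L}(\mathbf{M}_i)$, we can write $\mathbf{S}=\mathbf{M}_i\mathbf{Q}_i$ with $\mathbf{Q}_i$ integer, so $\mathbf{S}$ is a crm.
\item Because $\mathbf{R}$ is an lcrm, $\mathbf{S}=\mathbf{R}\mathbf{T}$ for an integer $\mathbf{T}$. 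Together with $\mathcal{L}(\mathbf{R})\subseteq\mathcal{L}(\mathbf{S})$, which gives $\mathbf{R}=\mathbf{S}\mathbf{T}'$, this forces $\mathbf{T}$ to be unimodular, and the two lattices coincide.
\end{itemize}
I expect this identification to be the only real obstacle. It requires existence of a lattice basis for the intersection, which follows from it being a subgroup of $\mathbb{Z}^D$ (e.g.\ via the HNF), and it also needs the lcrm to exist; I take its existence as given from the definitions in Section~\ref{s2}.

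The final step is the uniqueness dichotomy. By the first two steps, the fibre of the remainder map containing $\mathbf{f}$ is exactly the coset $\mathbf{f}+\mathcal{L}(\mathbf{R})$. By the division representation with respect to $\mathbf{R}$, each coset contains exactly one element of $\mathcal{N}(\mathbf{R})$.
\begin{itemize}
\item \emph{Sufficiency.} If $\mathbf{f}\in\mathcal{N}(\mathbf{R})$, then $\mathbf{f}$ is the unique vector in $\mathcal{N}(\mathbf{R})$ with the given remainders, namely the remainder modulo $\mathbf{R}$ of any vector in the fibre. So it is uniquely determined.
\item \emph{Necessity.} Interpreted as the statement that the admissible range where unique reconstruction holds must be (up to the FPD choice) $\mathcal{N}(\mathbf{R})$: any range strictly containing a full residue system contains two distinct points of some coset $\mathbf{f}+\mathcal{L}(\mathbf{R})$. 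Those two points have identical remainders, so uniqueness fails.
\end{itemize}
A counting argument confirms the bookkeeping. The number of distinct remainder tuples attainable is $|\mathbb{Z}^D/\mathcal{L}(\mathbf{R})|=|\det\mathbf{R}|=|\mathcal{N}(\mathbf{R})|$, so $\mathcal{N}(\mathbf{R})$ is a maximal uniquely determinable range.
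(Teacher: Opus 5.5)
The paper does not prove this proposition. It is quoted from \cite{MD1} and used as a black box, so your argument is a self-contained proof rather than a variant of anything in the text, and it is correct. Its whole content is the identity $\bigcap_{i}\mathcal{L}(\mathbf{M}_i)=\mathcal{L}(\mathbf{R})$. After that, the statement reduces to $\mathcal{N}(\mathbf{R})$ being a complete residue system of $\mathbb{Z}^D/\mathcal{L}(\mathbf{R})$.

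Two steps can be trimmed:
\begin{itemize}
\item The unimodularity of $\mathbf{T}$ is unnecessary, since $\mathcal{L}(\mathbf{S})=\mathcal{L}(\mathbf{R}\mathbf{T})\subseteq\mathcal{L}(\mathbf{R})\subseteq\mathcal{L}(\mathbf{S})$ already gives equality.
\item You need not assume an lcrm exists, because your $\mathbf{S}$ is one. Every crm has its columns in $\bigcap_i\mathcal{L}(\mathbf{M}_i)=\mathcal{L}(\mathbf{S})$, hence equals $\mathbf{S}$ times an integer matrix.
\end{itemize}
What your route buys is the clean characterization ``$\mathbf{R}$ is an lcrm if and only if $\mathcal{L}(\mathbf{R})=\bigcap_i\mathcal{L}(\mathbf{M}_i)$''. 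This also justifies directly the paper's unproved assertion in Section~\ref{s5} that an lcrm of the group lcrms $\mathbf{R}_1,\ldots,\mathbf{R}_K$ is an lcrm of all the moduli.

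Your caveat on the necessity direction is genuinely required, not pedantry. Read as a property of $\mathbf{f}$ alone, with $\mathbf{R}$ an arbitrary lcrm, the equivalence contradicts itself. Take $L=1$ and $\mathbf{M}_1=\mathrm{diag}(2,1)$. Both $\mathrm{diag}(2,1)$ and $\mathrm{diag}(2,1)\begin{pmatrix}1&0\\1&1\end{pmatrix}=\begin{pmatrix}2&0\\1&1\end{pmatrix}$ are lcrms. Their FPDs are $\{[0,0]^{\top},[1,0]^{\top}\}$ and $\{[0,0]^{\top},[1,1]^{\top}\}$, so $[1,0]^{\top}$ lies in one and not the other. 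The statement only makes sense relative to the fixed reconstruction range $\mathcal{N}(\mathbf{R})$. Under that reading, your observation is exactly the right argument: the fibre of any $\mathbf{f}\notin\mathcal{N}(\mathbf{R})$ also contains a distinct point of $\mathcal{N}(\mathbf{R})$ with the same remainders. The same dependence on the chosen lcrm is what forces the paper, in Lemma~\ref{lm:FPDregion}, to select the specific lcrm $\mathbf{M}_{l_0}\mathbf{D}$.
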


From Proposition~\ref{prop:MD-CRT}, the set of all uniquely determinable integer vectors is $\mathcal{N}(\mathbf{R})$, and its size $|\det(\mathbf{R})|$ defines the \emph{dynamic range} of the system.
In practice, one wants to minimize the sampling densities $|\det(\mathbf{M}_i)|$ while maximizing the dynamic range $|\det(\mathbf{R})|$. 
We next study the maximum dynamic range under a fixed determinant constraint, i.e., an upper bound on the sampling rates.

\subsection{Dynamic Range Analysis}

In the undersampling model introduced in the previous subsection, each matrix modulus serves as a sampling matrix whose absolute determinant represents the sampling rate. In many practical systems, such as moving-target parameter estimation in SAR imaging \cite{gangli1}, hardware limitations impose an upper bound on these sampling rates. This bound directly limits the possible matrix moduli, leading to a limited dynamic range.

To better understand the problem, we first recall the one-dimensional case.
For the classical CRT, if all the integer moduli are upper bounded by $q$, i.e., all integer moduli are less than or equal to $q$, where $q$ is a positive integer, then the maximum dynamic range equals the least common multiple (lcm) of $1,2,\cdots,q$, denoted by $\mbox{lcm}(1,2,\cdots,q)$.
This range can be achieved by choosing all integers from $1$ to $q$ as moduli, but that is clearly inefficient. A smaller subset of pairwise co-prime integers can already reach the same dynamic range, as shown below.

\begin{lemma}\label{lm1}
  Let $\{q_1,q_2,\cdots ,q_L\}\subseteq \{1,2,\cdots,q\}$ be a set of pairwise co-prime positive integers that maximizes the product $\prod_{i=1}^{L} q_i$. Then, $\prod_{i=1}^{L} q_i$ is equal to $\mbox{lcm}(1,2,\cdots,q)$.
\end{lemma}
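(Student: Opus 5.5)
The plan is to prove two inequalities: the product of any pairwise co-prime subset is at most $\mbox{lcm}(1,2,\cdots,q)$, and some pairwise co-prime subset attains exactly this value. The maximizer then has product equal to the lcm.

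\emph{Upper bound.} Since the $q_i$ are pairwise co-prime, we have $\prod_{i=1}^{L} q_i=\mbox{lcm}(q_1,\cdots,q_L)$. Each $q_i$ lies in $\{1,\cdots,q\}$ and therefore divides $\mbox{lcm}(1,2,\cdots,q)$. Hence $\mbox{lcm}(q_1,\cdots,q_L)$ divides $\mbox{lcm}(1,\cdots,q)$, which gives $\prod_i q_i\le \mbox{lcm}(1,\cdots,q)$.

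\emph{Achievability.} First write
\[
\mbox{lcm}(1,\cdots,q)=\prod_{p\le q,\ p \text{ prime}} p^{e_p},\qquad e_p=\lfloor \log_p q\rfloor .
\]
This holds because the exponent of $p$ in the lcm is the largest $e$ with $p^e\le q$. Now take the subset $\{p^{e_p} : p\le q \text{ prime}\}$.
\begin{itemize}
\item Each element satisfies $p^{e_p}\le q$, so it lies in $\{1,\cdots,q\}$.
\item Powers of distinct primes are pairwise co-prime.
\item The product of the elements is exactly $\mbox{lcm}(1,\cdots,q)$.
\end{itemize}
So the maximum product is at least the lcm. Combined with the upper bound, any maximizing set has product equal to $\mbox{lcm}(1,\cdots,q)$. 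The degenerate case $q=1$ gives the empty product or $\{1\}$, both equal to $1$, so it is consistent.

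The only step needing care is the prime-power expression for $\mbox{lcm}(1,\cdots,q)$, specifically that the exponent of each prime is $\lfloor\log_p q\rfloor$. It follows because $p^{e_p}$ itself is among $1,\cdots,q$, while no integer $\le q$ is divisible by $p^{e_p+1}>q$. Everything else is routine.
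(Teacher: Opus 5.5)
Your proof is correct. The upper bound is the same as in the paper, but you prove the reverse inequality by a different route.

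The paper never exhibits a maximizer. It argues by contradiction from maximality. Suppose some $p\le q$ does not divide $\prod_i q_i$; then some prime power $r^n\mid p$ does not divide it either. Either $r^n$ is co-prime to every $q_i$, so it can be appended to the set. Or exactly one $q_j$ contains $r$, to a power $r^a$ with $a<n$, and replacing $q_j$ by $q_j/r^a$ while appending $r^n$ multiplies the product by $r^{n-a}>1$. In both cases maximality is contradicted, so every integer in $\{1,\cdots,q\}$ divides the product.

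You instead compute the optimum directly. You use the factorization $\mbox{lcm}(1,\cdots,q)=\prod_{p\le q}p^{\lfloor\log_p q\rfloor}$ and the explicit set of maximal prime powers not exceeding $q$.

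Your version is shorter and constructive. It gives a concrete choice of $q_1,\cdots,q_L$ for Theorem~\ref{th:maxdynamic}. It also avoids the one point of care in the exchange step: $r^a$ must be the exact power of $r$ dividing $q_j$, otherwise $q_j/r^a$ need not be co-prime to $r^n$.

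The paper's exchange argument, in return, needs neither the answer nor the prime-power formula for the lcm in advance. It derives the divisibility property directly from the maximality of an arbitrary set.
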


\begin{proof}
  Because $q_1,q_2,\cdots ,q_L$ are pairwise co-prime integers, $\prod_{i=1}^{L} q_i$ equals the lcm of them. Since $\{q_1,q_2,\cdots ,q_L\}\subseteq \{1,2,\cdots,q\}$, we have $\mbox{lcm}(1,2,\cdots,q) \geq \prod_{i=1}^{L} q_i$. We then just need to prove $\mbox{lcm}(1,2,\cdots,q) \leq \prod_{i=1}^{L} q_i$ by proving that any integer $p$ with $1 \leq p \leq q$ is a divisor of $\prod_{i=1}^{L} q_i$. 

  Assume that there exists a $p$ with $1 \leq p \leq q$ such that $p$ is not a divisor of $\prod_{i=1}^{L} q_i$. Since any integer can be decomposed as the product of prime integers, there must be a divisor $r^{n}$ of $p$ where $r$ is a prime integer and $n$ is a positive integer, such that $r^{n}$ is not a divisor of $\prod_{i=1}^{L} q_i$. 
  
  If $r^{n}$ is co-prime with any $q_i$ for $1 \leq i \leq L$, we can construct a new set of pairwise co-prime integers $\{q_1,q_2,\cdots ,q_L,r^{n}\}\subseteq \{1,2,\cdots,q\}$ such that the product $r^{n}\prod_{i=1}^{L} q_i$ is greater than $\prod_{i=1}^{L} q_i$, which contradicts the assumption. 
  
  If $r^{n}$ is not co-prime with all $q_i$ for $1 \leq i \leq L$, there exists exactly one $q_j$ such that $q_j$ is not co-prime with $r^{n}$ since all $q_i$ for $1 \leq i \leq L$ are pairwise co-prime. Then, $q_j$ has a divisor $r^{a}$ for $1 \leq a <n$, otherwise $r^n$ would be a divisor of $q_j$ that would contradict with what is assumed for $r^n$. We then construct a new set of pairwise co-prime integers $\{q_1,q_2,\cdots, q_j/r^{a}, \cdots, q_L,r^{n}\}\subset \{1,2,\cdots,q\}$ such that their product $r^{n-a}\prod_{i=1}^{L} q_i$ is greater than $\prod_{i=1}^{L} q_i$, which contradicts the assumption. 

  Therefore, any integer $p$ with $1 \leq p \leq q$ is a divisor of $\prod_{i=1}^{L} q_i$ and then $\mbox{lcm}(1,2,\cdots,q) = \prod_{i=1}^{L} q_i$.
\end{proof}

Now consider the multidimensional case where each matrix modulus
$\mathbf{M}$ satisfies $|\det(\mathbf{M})|\le q$ for a positive integer $q$. In this setting, the maximum possible dynamic range is determined by the lcrm of all such matrices. However, unlike in the one-dimensional case, the number of such matrices is huge especially when the dimension $D$ is large. As a result, it is hard to describe the maximum possible dynamic range in a direct way. 
Before presenting the main result, we first present two lemmas that will be used for the main result. 

\begin{lemma}[Lemma 1 in \cite{guo25}]\label{lm2}
  Let $m_1$ and $m_2$ be two non-zero integers with the greatest common divisor (gcd) $k$ for a positive
  integer $k$.
  For each $i$ with $1 \leq i \leq D$, we can obtain a new matrix $[k\mathbf{e}_i \quad \mathbf{0}]$ by performing elementary column transformations on
   matrix $[m_1\mathbf{e}_i \quad m_2\mathbf{e}_i]$, where $\mathbf{e}_i$ is
  the $D$-dimensional vector with the $i$-th component $1$ and
  the other components $0$.
\end{lemma}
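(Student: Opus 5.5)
The plan is to mimic the Euclidean algorithm on the pair $(m_1, m_2)$, implementing each Euclidean step as an elementary column operation on the $D\times 2$ matrix $[m_1\mathbf{e}_i \quad m_2\mathbf{e}_i]$. The key observation is that both columns are integer multiples of the same vector $\mathbf{e}_i$. So any elementary column operation on the matrix acts only on the scalar coefficients $(a,b)$ in $[a\mathbf{e}_i \quad b\mathbf{e}_i]$. Concretely, the three operations act as follows:
\begin{itemize}
  \item subtracting an integer multiple $c$ of one column from the other sends $(a,b)\mapsto(a,b-ca)$ or $(a-cb,b)$;
  \item swapping the columns sends $(a,b)\mapsto(b,a)$;
  \item multiplying a column by $-1$ changes the sign of the corresponding scalar.
\end{itemize}
Hence the problem reduces to showing that the row vector $(m_1,m_2)$ can be carried to $(k,0)$ by unimodular integer operations.

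The steps, in order, are as follows.
\begin{enumerate}
  \item Using sign changes, assume $m_1,m_2>0$; this is allowed since both are nonzero.
  \item Run the Euclidean division. Write $m_1 = c\,m_2 + s$ with $0\le s<m_2$, and subtract $c$ times the second column from the first to get $[s\mathbf{e}_i \quad m_2\mathbf{e}_i]$. Then swap columns.
  \item Repeat. Since the pair of nonnegative scalars strictly decreases in the standard Euclidean sense, the process terminates with a pair $(g,0)$.
  \item Identify $g$. Each operation is invertible over $\mathbb{Z}$, so the gcd of the pair of scalars is invariant. Hence $g=\gcd(m_1,m_2)=k$, giving $[k\mathbf{e}_i \quad \mathbf{0}]$.
\end{enumerate}

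Equivalently, one can phrase this via B\'ezout. Choose integers $x,y$ with $xm_1+ym_2=k$, and form the unimodular matrix
\[
\mathbf{V}=\begin{pmatrix} x & -m_2/k\\ y & m_1/k\end{pmatrix},
\]
whose determinant is $(xm_1+ym_2)/k=1$. Then $[m_1\mathbf{e}_i \quad m_2\mathbf{e}_i]\mathbf{V}=[k\mathbf{e}_i \quad \mathbf{0}]$. Every unimodular $2\times 2$ integer matrix is a product of elementary column matrices, so this matrix multiplication is a sequence of elementary column transformations.

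I expect no real obstacle. The only point needing care is that ``elementary column transformations'' here means integer-unimodular operations (swap, negate, add an integer multiple), not arbitrary rational scaling. Termination and preservation of the gcd are then standard facts about the Euclidean algorithm.
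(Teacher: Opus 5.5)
Your proof is correct: because both columns are multiples of $\mathbf{e}_i$, column operations act only on the scalar pair $(m_1,m_2)$, so either the Euclidean algorithm or right-multiplication by your B\'ezout matrix $\mathbf{V}$ produces $[k\mathbf{e}_i \quad \mathbf{0}]$ with $k>0$. Here $\mathbf{V}$ has determinant $1$ and is therefore a product of elementary matrices over $\mathbb{Z}$. The paper gives no proof of this lemma and simply imports it from \cite{guo25}, so there is no route to compare against, and your argument has no gaps.
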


\begin{lemma}[Lemma 2 in \cite{MD1}]\label{lm3}
  Let $\mathbf{M}_i$ for $1 \leq i \leq L$ be $L$ nonsingular and pairwise co-prime and commutative integer matrices, then their product $\mathbf{M}_1\mathbf{M}_2\cdots\mathbf{M}_L$ is an lcrm of them.
\end{lemma}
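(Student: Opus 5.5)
The plan is to reduce to the case of two commuting co-prime matrices and then induct on $L$. The overall strategy: show that $\mathbf{M}_1\mathbf{M}_2\cdots\mathbf{M}_L$ is a common right multiple of all the $\mathbf{M}_i$, and that every common right multiple $\mathbf{C}$ is a right multiple of this product. Commutativity gives the first part at once. For each $i$ we can move $\mathbf{M}_i$ to the front of the product, so
\[
\mathbf{M}_1\cdots\mathbf{M}_L=\mathbf{M}_i\prod_{j\neq i}\mathbf{M}_j ,
\]
and hence $\mathbf{M}_i$ left divides the product.

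\textbf{Step 1 (two matrices).} Let $\mathbf{M},\mathbf{N}$ be commuting and left co-prime. Since their gcld is unimodular, the SNF characterization of coprimality gives a Bezout identity $\mathbf{M}\mathbf{P}+\mathbf{N}\mathbf{Q}=\mathbf{I}$ for some integer matrices $\mathbf{P},\mathbf{Q}$. Suppose $\mathbf{C}=\mathbf{M}\mathbf{A}=\mathbf{N}\mathbf{B}$ is a crm. We want $\mathbf{C}=\mathbf{M}\mathbf{N}\mathbf{X}$ with $\mathbf{X}$ integer. Note that $(\mathbf{M}\mathbf{N})^{-1}\mathbf{C}=\mathbf{N}^{-1}\mathbf{A}$, so it suffices to show that $\mathbf{N}$ left divides $\mathbf{A}$. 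Multiply the Bezout identity on the right by $\mathbf{A}$:
\[
\mathbf{A}=\mathbf{M}\mathbf{P}\mathbf{A}+\mathbf{N}\mathbf{Q}\mathbf{A}.
\]
The second term is a left multiple of $\mathbf{N}$, so it remains to handle $\mathbf{M}\mathbf{P}\mathbf{A}$.

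\textbf{Main obstacle.} The difficulty is the noncommutativity between $\mathbf{P}$ and $\mathbf{M}$: we would like to rewrite $\mathbf{M}\mathbf{P}\mathbf{A}$ as $\mathbf{P}\mathbf{M}\mathbf{A}=\mathbf{P}\mathbf{N}\mathbf{B}$, but this requires $\mathbf{P}$ to commute with $\mathbf{M}$. I will avoid this by using a right-sided Bezout identity instead. Because $\mathbf{M}\mathbf{N}=\mathbf{N}\mathbf{M}$, we have
\[
\det(\mathbf{M}^{\top},\mathbf{N}^{\top})\ \text{structure}\quad\text{and}\quad |\det(\mathrm{lcrm})|\cdot|\det(\mathrm{gcld})|=|\det\mathbf{M}\det\mathbf{N}| ,
\]
so left coprimality should force right coprimality. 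This gives $\mathbf{P}'\mathbf{M}+\mathbf{Q}'\mathbf{N}=\mathbf{I}$. Then
\[
\mathbf{N}^{-1}\mathbf{A}=\mathbf{N}^{-1}(\mathbf{P}'\mathbf{M}+\mathbf{Q}'\mathbf{N})\mathbf{A}.
\]
This still does not close cleanly. The cleaner route is the lattice intersection: $\mathcal{L}(\mathbf{C})\subseteq\mathcal{L}(\mathbf{M})\cap\mathcal{L}(\mathbf{N})$, and the lcrm generates exactly $\mathcal{L}(\mathbf{M})\cap\mathcal{L}(\mathbf{N})$. So it suffices to show that this intersection equals $\mathcal{L}(\mathbf{M}\mathbf{N})$. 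Coprimality gives $\mathcal{L}(\mathbf{M})+\mathcal{L}(\mathbf{N})=\mathbb{Z}^D$. The index formula
\[
[\mathbb{Z}^D:\mathcal{L}(\mathbf{M})\cap\mathcal{L}(\mathbf{N})]=[\mathbb{Z}^D:\mathcal{L}(\mathbf{M})]\,[\mathbb{Z}^D:\mathcal{L}(\mathbf{N})]\big/[\mathbb{Z}^D:\mathcal{L}(\mathbf{M})+\mathcal{L}(\mathbf{N})]
\]
then gives $|\det\mathbf{M}\det\mathbf{N}|$. Commutativity shows $\mathcal{L}(\mathbf{M}\mathbf{N})$ is contained in the intersection, and it has the same index, so the two lattices are equal.

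\textbf{Step 2 (induction).} Set $\mathbf{K}=\mathbf{M}_1\cdots\mathbf{M}_{L-1}$. By the inductive hypothesis, $\mathcal{L}(\mathbf{K})=\bigcap_{i<L}\mathcal{L}(\mathbf{M}_i)$. Two facts are needed:

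1. $\mathbf{K}$ commutes with $\mathbf{M}_L$, which is immediate because $\mathbf{M}_L$ commutes with each $\mathbf{M}_i$.
2. $\mathbf{K}$ and $\mathbf{M}_L$ are co-prime, i.e. $\mathcal{L}(\mathbf{K})+\mathcal{L}(\mathbf{M}_L)=\mathbb{Z}^D$. This is the delicate point.

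For the second fact I would argue as follows. Since $\mathbf{M}_L$ commutes with $\mathbf{M}_1$ and $\mathcal{L}(\mathbf{M}_1)+\mathcal{L}(\mathbf{M}_L)=\mathbb{Z}^D$, we get
\[
\mathcal{L}(\mathbf{M}_1\mathbf{M}_2)+\mathcal{L}(\mathbf{M}_L)\supseteq\mathbf{M}_1\mathcal{L}(\mathbf{M}_2)+\mathbf{M}_1\mathcal{L}(\mathbf{M}_L)=\mathbf{M}_1\mathbb{Z}^D ,
\]
using $\mathbf{M}_1\mathbf{M}_L=\mathbf{M}_L\mathbf{M}_1$. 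Adding $\mathcal{L}(\mathbf{M}_L)$ to both sides yields $\mathbb{Z}^D$, and iterating this over the remaining factors proves the coprimality. Step 1 applied to $\mathbf{K}$ and $\mathbf{M}_L$ then completes the induction.
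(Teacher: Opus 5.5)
The paper does not prove this lemma; it quotes it as Lemma~2 of \cite{MD1}. There is therefore no in-paper argument to compare with, and your proposal has to stand on its own. It does: the argument you finally settle on is correct.

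In Step~1, the Bezout identity gives $\mathcal{L}(\mathbf{M})+\mathcal{L}(\mathbf{N})=\mathbb{Z}^D$. The second isomorphism theorem then gives $[\mathbb{Z}^D:\mathcal{L}(\mathbf{M})\cap\mathcal{L}(\mathbf{N})]=|\det\mathbf{M}\det\mathbf{N}|$. Since $\mathcal{L}(\mathbf{M}\mathbf{N})=\mathcal{L}(\mathbf{N}\mathbf{M})$ lies in the intersection with the same index, the two lattices are equal. So the columns of any crm $\mathbf{C}$ lie in $\mathcal{L}(\mathbf{M}\mathbf{N})$, and $(\mathbf{M}\mathbf{N})^{-1}\mathbf{C}$ is integral.

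The coprimality propagation in Step~2 is also valid, but you cite the wrong hypothesis at the key equality. The equality $\mathbf{M}_1\mathcal{L}(\mathbf{M}_2)+\mathbf{M}_1\mathcal{L}(\mathbf{M}_L)=\mathbf{M}_1\mathbb{Z}^D$ uses the coprimality of $\mathbf{M}_2$ and $\mathbf{M}_L$, not of $\mathbf{M}_1$ and $\mathbf{M}_L$. The latter is used only in the next step, $\mathbf{M}_1\mathbb{Z}^D+\mathcal{L}(\mathbf{M}_L)=\mathbb{Z}^D$. Commutativity enters through $\mathbf{M}_1\mathcal{L}(\mathbf{M}_L)=\mathcal{L}(\mathbf{M}_L\mathbf{M}_1)\subseteq\mathcal{L}(\mathbf{M}_L)$. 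The general iteration step needs $\mathbf{M}_1\cdots\mathbf{M}_j$ to commute with $\mathbf{M}_L$ and $\mathbf{M}_{j+1}$ to be co-prime to $\mathbf{M}_L$, and both hold. Algebraically, the step amounts to substituting $\mathbf{I}=\mathbf{M}_2\mathbf{P}_2+\mathbf{M}_L\mathbf{Q}_2$ into $\mathbf{I}=\mathbf{M}_1\mathbf{P}_1+\mathbf{M}_L\mathbf{Q}_1$. This gives $\mathbf{M}_1\mathbf{M}_2(\mathbf{P}_2\mathbf{P}_1)+\mathbf{M}_L(\mathbf{M}_1\mathbf{Q}_2\mathbf{P}_1+\mathbf{Q}_1)=\mathbf{I}$.

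For the write-up, delete the ``main obstacle'' detour. The display involving $\det(\mathbf{M}^{\top},\mathbf{N}^{\top})$ is not meaningful, and the claim that left coprimality forces right coprimality is asserted there without proof. The claim is in fact true for commuting matrices, e.g.\ by comparing common left and right kernels modulo each prime. With a right Bezout identity $\mathbf{P}'\mathbf{M}+\mathbf{Q}'\mathbf{N}=\mathbf{I}$, that route would also close. Indeed, $\mathbf{X}=(\mathbf{M}\mathbf{N})^{-1}\mathbf{C}$ satisfies $\mathbf{N}\mathbf{X}=\mathbf{A}$ and $\mathbf{M}\mathbf{X}=\mathbf{B}$, hence $\mathbf{X}=\mathbf{P}'\mathbf{B}+\mathbf{Q}'\mathbf{A}$. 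Your index argument is still the better choice, because it needs only the left coprimality that is actually given.

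Finally, state explicitly that an lcrm of several nonsingular matrices is exactly a basis matrix of $\bigcap_i\mathcal{L}(\mathbf{M}_i)$. That is the form in which you use the inductive hypothesis.
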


We then proceed to present the main result.

\begin{thm}\label{th:maxdynamic}
  Assume that the absolute determinant values of all $D \times D$ matrix moduli are upper bound by $q$. Let $\{q_1,q_2,\cdots ,q_L\}\subset \{1,2,\cdots,q\}$ be a set of pairwise co-prime integers that maximizes the product $\prod_{i=1}^{L} q_i$ among all such sets. Then, the maximum possible dynamic range of MD-CRT under the upper bound $q$ of matrix moduli is $(q_1 q_2\cdots q_L)^{D}$.  
\end{thm}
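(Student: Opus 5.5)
Write $Q=q_1q_2\cdots q_L$, which by Lemma~\ref{lm1} equals $\mbox{lcm}(1,2,\cdots,q)$. The proof has two halves: an achievability construction attaining $Q^D$, and an upper bound showing that no collection of matrices with $|\det(\mathbf{M})|\le q$ can have an lcrm of absolute determinant larger than $Q^D$.

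\textbf{Achievability.} Take the $DL$ diagonal matrices $\mathbf{M}_{i,j}=\text{diag}(1,\dots,1,q_i,1,\dots,1)$, with $q_i$ in the $j$-th position, for $1\le i\le L$ and $1\le j\le D$. Each satisfies $|\det(\mathbf{M}_{i,j})|=q_i\le q$, and diagonal matrices commute.
\begin{itemize}
\item Two such matrices with different positions $j\neq j'$ are left co-prime, since the SNF of $(\mathbf{M}_{i,j}\ \mathbf{M}_{i',j'})$ is clearly $(\mathbf{I}\ \mathbf{0})$.
\item Two such matrices with the same position $j$ but different indices $i\neq i'$ are co-prime by Lemma~\ref{lm2}, because $\gcd(q_i,q_{i'})=1$.
\end{itemize}
Lemma~\ref{lm3} then says the product $\text{diag}(Q,\dots,Q)=Q\mathbf{I}$ is an lcrm of the family, so the dynamic range $Q^D$ is attained. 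Alternatively, one can use the $L$ matrices $q_i\mathbf{I}$, but these have determinant $q_i^D$, which may exceed $q$. The per-coordinate construction above is the one that is always admissible.

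\textbf{Upper bound.} I would show that $Q\mathbf{I}$ is a common right multiple of every integer matrix $\mathbf{M}$ with $0<|\det(\mathbf{M})|\le q$. Indeed, $Q\mathbf{I}=\mathbf{M}\,\big(Q\,\mathbf{M}^{-1}\big)$ and $\mathbf{M}^{-1}=\text{adj}(\mathbf{M})/\det(\mathbf{M})$. Since $|\det(\mathbf{M})|\le q$, $\det(\mathbf{M})$ divides $\mbox{lcm}(1,\dots,q)=Q$, so $Q\mathbf{M}^{-1}$ is an integer matrix.

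Now let $\mathbf{M}_1,\dots,\mathbf{M}_K$ be any admissible moduli with lcrm $\mathbf{R}$. Because $Q\mathbf{I}$ is a crm of all of them, the definition of lcrm gives $Q\mathbf{I}=\mathbf{R}\mathbf{P}$ for some integer matrix $\mathbf{P}$. Taking determinants, $|\det(\mathbf{R})|$ divides $Q^D$, and in particular $|\det(\mathbf{R})|\le Q^D$. Combined with achievability, the maximum dynamic range is exactly $(q_1q_2\cdots q_L)^D$.

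\textbf{Main obstacle.} The only delicate points are the use of lattices and divisibility in place of commutativity. Through the adjugate argument, "$\mathbf{M}$ left-divides $Q\mathbf{I}$" needs only $\det(\mathbf{M})\mid Q$, and the step from "$Q\mathbf{I}$ is a crm" to "$\det\mathbf{R}\mid Q^D$" relies on the defining property of the lcrm.

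In the achievability half, the thing to check carefully is the pairwise co-primality claims, so that Lemma~\ref{lm3} applies as stated. If that turns out to be awkward, there is a direct fallback. The lattice intersection $\bigcap_{i,j}\mathcal{L}(\mathbf{M}_{i,j})$ consists of the vectors whose $j$-th coordinate is divisible by every $q_i$, hence by $Q$. This intersection is $\mathcal{L}(Q\mathbf{I})$, and it is the lattice generated by the lcrm.
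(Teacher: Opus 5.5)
Your proposal is correct and follows the paper's proof essentially step for step. For the upper bound, both use the adjugate argument with $\det(\mathbf{M})\mid Q=\mbox{lcm}(1,\dots,q)$ to show $Q\mathbf{I}$ is a crm of every admissible modulus. For achievability, both take the diagonal matrices $\text{diag}(1,\dots,q_i,\dots,1)$, which are pairwise co-prime and commuting, so Lemma~\ref{lm3} makes $Q\mathbf{I}$ an lcrm.

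You are slightly more careful than the paper in two places. First, you make explicit that $Q\mathbf{I}=\mathbf{R}\mathbf{P}$ forces $|\det(\mathbf{R})|\le Q^D$. Second, you split the co-primality check into the cases $j\neq j'$ and $j=j'$ with $i\neq i'$, whereas the paper only states the condition $i_1\neq i_2$ and $j_1\neq j_2$.
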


\begin{proof}
  We first prove that the dynamic range of MD-CRT under the upper bound $q$ of matrix moduli is less than or equal to $(q_1 q_2\cdots q_L)^{D}$ by proving the $D\times D$ diagonal matrix
  \begin{equation*}
      \mathbf{R}=\mbox{diag}(q_1 q_2\cdots q_L,\cdots,q_1 q_2\cdots q_L)
  \end{equation*}
  is a crm of all the matrix moduli. 

  Let $\mathbf{M}$ be an integer matrix with the absolute determinant value $p$ for $1 \leq p \leq q$. We then have 
  \begin{align*}
        \mathbf{M}^{-1}\mathbf{R} &= \pm (1/p) \ \mbox{adj}(\mathbf{M}) \ \mbox{diag}(q_1 q_2\cdots q_L,\cdots,q_1 q_2\cdots q_L)\\
        &=\pm \ \mbox{adj}(\mathbf{M}) \ \mbox{diag}(\frac{q_1 q_2\cdots q_L}{p},\cdots,\frac{q_1 q_2\cdots q_L}{p}).
  \end{align*}
  From Lemma \ref{lm1}, the matrix $\mbox{diag}(\frac{q_1 q_2\cdots q_L}{p},\cdots,\frac{q_1 q_2\cdots q_L}{p})$ is an integer matrix. Therefore, $\mathbf{M}^{-1}\mathbf{R}$ is an integer matrix and $\mathbf{R}$ is a right multiple of $\mathbf{M}$. Since $\mathbf{R}$ is a right multiple of all possible integer matrices that the absolute determinant values are less than or equal to $q$, $\mathbf{R}$ is a crm of all such matrices. For MD-CRT, the uniquely determinable range is an FPD of an lcrm of all the matrix moduli and the dynamic range is equal to the absolute determinant value of the lcrm of all the matrix moduli. As a result, the maximum dynamic range of MD-CRT under the upper bound $q$ of matrix moduli is less than or equal to $|\det(\mathbf{R})|$, i.e., $(q_1 q_2\cdots q_L)^{D}$.

  We then prove that there is a set of integer matrices whose absolute determinant values are less than or equal to $q$ such that the dynamic range of MD-CRT with these matrix moduli is $(q_1 q_2\cdots q_L)^{D}$ by proving $\mathbf{R}$ is an lcrm of these matrices. 

  Let $\mathbf{D}_{i,1}=\mbox{diag}(q_i,1,\cdots,1)$, $\mathbf{D}_{i,2}=\mbox{diag}(1,q_i,\cdots,1)$, $\cdots$, $\mathbf{D}_{i,D}=\mbox{diag}(1,1,\cdots,q_i)$ for $1 \leq i \leq L$. We can easily check that the Smith normal form of matrix $(\mathbf{D}_{i_1,j_1} \ \mathbf{D}_{i_2,j_2})$ is $(\mathbf{I} \ \mathbf{0})$ for $1 \leq i_1\neq i_2 \leq L$ and $1 \leq j_1 \neq j_2 \leq D$ according to Lemma \ref{lm2}. As a result, all matrices $\mathbf{D}_{i,j}$ for $1 \leq i \leq L$ and $1 \leq j \leq D$ are pairwise co-prime and commutative. According to Lemma \ref{lm3}, the product of these matrices is an lcrm of them, which is the matrix $\mathbf{R}$. As a result, if we choose $\mathbf{D}_{i,j}$ as matrix moduli, the dynamic range is equal to $(q_1 q_2\cdots q_L)^{D}$.

  Combining the above two parts, we have completed the proof.
\end{proof}

According to Lemma \ref{lm1}, the maximum possible dynamic range of MD-CRT under the upper bound $q$ of matrix moduli is $\big{(}\mbox{lcm}(1,2,\cdots,q)\big{)}^{D}$. This result is consistent with the classical CRT when $D = 1$. The proof of Theorem \ref{th:maxdynamic} also shows that diagonal matrices are already sufficient to achieve the maximum dynamic range. Although non-diagonal matrices provide a larger candidate set, they cannot further increase this dynamic range. Next, we explore whether non-diagonal moduli can still bring advantages in other aspects of the MD-CRT.

\subsection{Sampling Advantage of Non-Diagonal Matrix Moduli}

While diagonal matrices achieve the maximum possible dynamic range, their sampling rates are often very uneven across dimensions.
For example, in $\mathbf{D}_{i,j}$, the sampling rate along the $j$-th dimension is $q_i$, and it is $1$ on the others. Such an imbalance may cause poor numerical conditioning and practical hardware difficulty. A natural question is whether these diagonal sampling matrices can be replaced by non-diagonal ones that have more balanced per-dimension sampling rates, without increasing the total number of matrices or changing their determinants. 

From Lemma \ref{lm2}, we have the following straightforward necessary and sufficient condition that two diagonal integer matrices are co-prime.

\begin{cor}\label{cor1}
    Two nonsingular diagonal integer matrices $\mbox{diag}(a_1,\cdots,a_D)$ and $\mbox{diag}(b_1,\cdots,b_D)$ are co-prime if and only if $a_i$ and $b_i$ are co-prime integers for all $1 \leq i \leq D$.
\end{cor}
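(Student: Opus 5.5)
The plan is to use the Smith normal form characterization of coprimality from Section~\ref{s2}: $\mathbf{A}=\mbox{diag}(a_1,\cdots,a_D)$ and $\mathbf{B}=\mbox{diag}(b_1,\cdots,b_D)$ are co-prime if and only if the SNF of the $D\times 2D$ matrix $(\mathbf{A}\ \mathbf{B})$ is $(\mathbf{I}\ \mathbf{0})$. Write $(\mathbf{A}\ \mathbf{B})$ as the column blocks $[a_i\mathbf{e}_i \quad b_i\mathbf{e}_i]$, $1\le i\le D$, up to a column permutation, which is unimodular. Column operations inside one such pair never affect the other pairs.

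\emph{Sufficiency.} Suppose $\gcd(a_i,b_i)=1$ for every $i$. By Lemma~\ref{lm2}, elementary column operations turn each pair $[a_i\mathbf{e}_i \quad b_i\mathbf{e}_i]$ into $[\mathbf{e}_i \quad \mathbf{0}]$. Doing this for all $i$ and then permuting the columns gives $(\mathbf{I}\ \mathbf{0})$. So the SNF is $(\mathbf{I}\ \mathbf{0})$, and $\mathbf{A},\mathbf{B}$ are co-prime.

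\emph{Necessity.} For general entries, Lemma~\ref{lm2} reduces each pair to $[k_i\mathbf{e}_i \quad \mathbf{0}]$ with $k_i=\gcd(a_i,b_i)\neq 0$. Hence $(\mathbf{A}\ \mathbf{B})\mathbf{V} = (\mbox{diag}(k_1,\cdots,k_D)\ \mathbf{0})$ for some unimodular $\mathbf{V}$. The diagonal matrix $\mathbf{K}=\mbox{diag}(k_1,\cdots,k_D)$ is then a cld of $\mathbf{A}$ and $\mathbf{B}$, since $\mathbf{K}^{-1}\mathbf{A}=\mbox{diag}(a_i/k_i)$ and $\mathbf{K}^{-1}\mathbf{B}=\mbox{diag}(b_i/k_i)$ are integer matrices. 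If $\mathbf{A},\mathbf{B}$ are co-prime, their gcld is unimodular, and every cld left-divides it. Therefore $\mathbf{K}$ left-divides a unimodular matrix. Taking determinants forces $|\det\mathbf{K}|=\prod_i k_i$ to divide $1$, so every $k_i=1$, i.e., $a_i$ and $b_i$ are co-prime for all $i$.

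As an alternative route for necessity, the SNF invariant $\delta_1\cdots\delta_D$ equals the gcd of the $D\times D$ minors of $(\mathbf{A}\ \mathbf{B})$, which is preserved by unimodular operations and equals $\prod_i k_i$ for $(\mathbf{K}\ \mathbf{0})$. The main point needing care is precisely that the reduced form $(\mathbf{K}\ \mathbf{0})$ determines coprimality, because $\mathbf{K}$ need not itself be in SNF. Either the cld argument or the determinantal-divisor argument settles this, and no other step is more than bookkeeping.
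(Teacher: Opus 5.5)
Your proposal is correct and takes the route the paper indicates: the paper gives no separate proof and only says the corollary is straightforward from Lemma~\ref{lm2}, which amounts to reducing $(\mathbf{A}\ \mathbf{B})$ one column pair at a time to $(\mathrm{diag}(k_1,\ldots,k_D)\ \mathbf{0})$ with $k_i=\gcd(a_i,b_i)$. For necessity, your cld argument and your determinantal-divisor argument both correctly supply the step the paper leaves implicit: since $\mathrm{diag}(k_1,\ldots,k_D)$ need not be in Smith form, one still has to show that any $k_i>1$ rules out the Smith form $(\mathbf{I}\ \mathbf{0})$.
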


Since $q_i$ for $1 \leq i \leq L$ are pairwise co-prime integers, to keep the maximum dynamic range, we must have $L$ groups of matrix moduli $\mathcal{S}_i$ for $1 \leq i \leq L$, such that $\mbox{diag}(q_i,\cdots,q_i)$ is an lcrm of the matrices in $\mathcal{S}_i$. 
Therefore, to analyze the replacement of diagonal matrices $\mathbf{D}_{i,j}$ with non-diagonal ones that achieve more balanced per-dimension sampling rates,
it suffices to focus on a single group corresponding to a fixed $q_i$. The analysis for other groups is similar.

Let $q_i=p_1^{n_1} p_2^{n_2}\cdots p_k^{n_k}$ be its prime factorization, where $k, n_1, n_2, \cdots, n_k$ are positive integers and $p_1,\cdots,p_k$ are prime integers.

Consider the special case $n_j=D$ for all $1\leq j \leq k$. In this case, $q_i$ can be written as $q_i=p_1^{D} p_2^{D}\cdots p_k^{D}$.
To construct $D$-dimensional matrices from these $k$ factors, 
we redefine a new set of $D$ pairwise co-prime integers 
$s_1,s_2,\ldots,s_D$ as follows:
if $k<D$, we append additional factors equal to $1$; 
if $k>D$, we combine some adjacent factors into a single one.
After this adjustment, $q_i$ can still be expressed as 
$q_i=s_1^{D}s_2^{D}\cdots s_D^{D}$, 
where $s_1,\ldots,s_D$ are pairwise co-prime integers.
One can then construct $D$ many pairwise co-prime 
diagonal matrices $\mathrm{diag}(s_1^{D},s_2^{D},\cdots,s_D^{D})$, $\mathrm{diag}(s_2^{D},s_3^{D},\cdots,s_1^{D})$, $\cdots$, $\mathrm{diag}(s_D^{D},s_1^{D},\cdots,s_{D-1}^{D})$.
For this diagonal matrices, the largest sampling rate along each dimension is $\max_{1 \leq j \leq D}s_j^{D}$. 
In this case, you cannot find another $D$ pairwise co-prime diagonal matrices with determinant $q_i$ such that the largest sampling rate along each dimension is less than $\max_{1 \leq j \leq D}s_j^{D}$.
However, following the construction in \cite{guo25}, we can have a set of $D$ non-diagonal matrices with determinant values $q_i$, whose lcrm is still $\mathrm{diag}(q_i,\ldots,q_i)$.
In this case, the maximum sampling rate along each dimension is $s_1s_2\cdots s_D+1$, which is much smaller than $\max_{1 \leq j \leq D}s_j^{D}$.
This shows that the non-diagonal matrices gives a more balanced sampling pattern compared with the diagonal construction.  
The special case $k=1$ has been studied in detail in \cite{guo25}.

For other cases of the exponents $n_j$, similar non-diagonal constructions can still be obtained by following the same approach in \cite{guo25}. They also lead to more balanced per-dimension sampling rates compared with the diagonal ones. We can also show that the resulting matrices are pairwise co-prime. 
Although our simulations indicate that their lcrm still equals $\mathrm{diag}(q_i,\ldots,q_i)$ for dimension $D\leq 75$, a complete theoretical proof remains open and will be left for future work.  

\section{Single-Stage Robust MD-CRT}\label{s4}

In low-SNR scenarios, detected vector remainders are often erroneous. The MD-CRT is highly sensitive to such errors, i.e., even a small error in a single vector remainder may cause a large error in the reconstructed vector. As a result, directly applying the MD-CRT under noise performs poorly. To address this issue, a robust MD-CRT for general matrix moduli was proposed in \cite{MD2}. It guarantees that if each vector remainder error is bounded by a moduli-dependent threshold, then the reconstruction error of the unknown integer vector is bounded by the same threshold. A larger threshold means a higher robustness level.
In this section, we briefly review the robust MD-CRT in \cite{MD2} and then analyze the potential robustness advantage of non-diagonal matrix moduli over diagonal matrix moduli, i.e., advantage of non-separable over separable robust MD-CRT, in the single-stage setting.  

\subsection{Single-Stage Robust MD-CRT}

Consider the following system of congruences:
\begin{equation}\label{cong}
    \begin{aligned}
      \mathbf{f} &= \mathbf{M}_1\mathbf{n}_{1}+\mathbf{r}_{1}, \\
      \mathbf{f} &= \mathbf{M}_2\mathbf{n}_{2}+\mathbf{r}_{2}, \\
       &\vdotswithin{=}\\
      \mathbf{f} &= \mathbf{M}_L\mathbf{n}_{L}+\mathbf{r}_{L},
    \end{aligned}
\end{equation}
where $\mathbf{f}$ is an unknown integer vector to be determined, $\mathbf{M}_i$, for $1 \leq i \leq L$, are nonsingular integer matrices and $\mathbf{r}_i \in \mathcal{N}(\mathbf{M}_i)$, for $1 \leq i \leq L$, are vector remainders of $\mathbf{f}$ modulo $\mathbf{M}_i$.
If all vector remainders $\mathbf{r}_i$ for $1 \leq i \leq L$ are error-free, then the unknown vector $\mathbf{f}$ can be uniquely determined by the MD-CRT if $\mathbf{f}$ lies within the FPD generated by an lcrm of the matrix moduli. 

In practical applications, the observed vector remainders are often noisy, i.e.,
\begin{equation*}
    \tilde{\mathbf{r}}_i=\mathbf{r}_i+\Delta\mathbf{r}_i, \quad 1 \leq i \leq L,
\end{equation*}
where ${\mathbf{r}_i} \in \mathcal{N}(\mathbf{M}_i)$ is an error-free vector remainder and $\Delta\mathbf{r}_i \in \mathbb{Z}^{D}$ represents an error.
If the MD-CRT is directly applied to these erroneous vector remainders $\tilde{\mathbf{r}}_i$, the resulting vector $\tilde{\mathbf{f}}$ constructed from MD-CRT may have a large error compared to the true vector $\mathbf{f}$, even when the errors $\Delta\mathbf{r}_i$ are small.

To overcome this issue, a robust MD-CRT for general matrix moduli has been proposed in \cite{MD2}. The key idea is to first accurately estimate the terms $\mathbf{M}_i\mathbf{n}_i$ for all $1 \leq i \leq L$ in the system of congruences (\ref{cong}) and then reconstruct the unknown vector $\mathbf{f}$ by averaging as follows:
\begin{equation*}
    \tilde{\mathbf{f}}=\frac{1}{L}\sum_{i=1}^{L}(\mathbf{M}_i\mathbf{n}_i+\tilde{\mathbf{r}}_i).
\end{equation*} 
In this case, the reconstruction error can be expressed as 
\begin{equation*}
    \tilde{\mathbf{f}} - \mathbf{f} =\frac{1}{L}\sum_{i=1}^{L} \Delta\mathbf{r}_i.
\end{equation*}
If the errors in the vector remainders are bounded by a parameter $\tau$, i.e., $\|\Delta\mathbf{r}_i\| \leq \tau$, for all $1 \leq i \leq L$, then the reconstruction error is also bounded by the same parameter:
\begin{equation*}
    \|\tilde{\mathbf{f}}-\mathbf{f} \| \leq \tau.
\end{equation*}
The parameter $\tau$ is called the vector remainder error bound, which is determined by the length of the shortest nonzero vector of certain lattices
generated by pairwise gclds of the moduli. It also reflects the level of robustness: a larger $\tau$ indicates a stronger tolerance to vector remainder errors and thus a higher robustness level.

Let $\mathbf{G}_{i,j}$ be a gcld of $\mathbf{M}_i$ and $\mathbf{M}_j$, for all $1 \leq i\neq j \leq L$. A necessary and sufficient condition for accurately estimating $\{\mathbf{M}_i\mathbf{n}_i\}_{i=1}^L$ in \cite{MD2} is given as follows.

\begin{prop} [Robust MD-CRT from \cite{MD2}] \label{robust}
Let  $\mathbf{M}_i$ for all $1 \leq i \leq L$ be $L$ distinct, arbitrary, nonsingular integer matrices.   
Suppose that there exists an index $l_0 \in \{1, 2, \dots, L\}$ satisfying
\begin{equation}\label{l0}
\min_{\substack{1 \leq j \leq L\\ j \neq l_0}} \lambda_{\mathcal{L}(\mathbf{G}_{l_0,j})}
= \max_{1 \leq i \leq L} \min_{\substack{1 \leq j \leq L\\ j \neq i}} \lambda_{\mathcal{L}(\mathbf{G}_{i,j})}.
\end{equation}

For any integer vector $\mathbf{f}$ with
\begin{equation}\label{dynamicrange}
\left\lfloor \mathbf{M}_{l_0}^{-1} \mathbf{f} \right\rfloor \in \mathcal{N} \left( \mathbf{M}_{l_0}^{-1} \text{lcrm}(\mathbf{M}_1, \mathbf{M}_2, \cdots, \mathbf{M}_L) \right),
\end{equation}
$\{\mathbf{M}_i\mathbf{n}_i\}_{i=1}^L$ can be accurately determined from the erroneous remainders $\{\tilde{\mathbf{r}}_i\}_{i=1}^L$, if and only if
\begin{equation}\label{iff}
\mathbf{0} = \arg\min_{\mathbf{h} \in \mathcal{L}(\mathbf{G}_{l_0,j})} \| \mathbf{h} - (\Delta \mathbf{r}_j - \Delta \mathbf{r}_{l_0}) \|,
\end{equation}
for all $1 \leq j \leq L$ with $j \neq l_0$.
Moreover, if the remainder errors are bounded by $\| \Delta \mathbf{r}_i \| \leq \tau$ for all  $1 \leq i \leq L$, then a sufficient condition for correct reconstruction of $\{\mathbf{M}_i\mathbf{n}_i\}_{i=1}^L$ is
\begin{equation}\label{tau}
\tau < \max_{1 \leq i \leq L} \min_{\substack{1 \leq j \leq L\\ j \neq i}} 
\frac{\lambda_{\mathcal{L}(\mathbf{G}_{i,j})}}{4}
= \min_{\substack{1 \leq j \leq L\\ j \neq l_0}} 
\frac{\lambda_{\mathcal{L}(\mathbf{G}_{l_0,j})}}{4}.
\end{equation}
\end{prop}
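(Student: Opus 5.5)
The plan is to follow the structure of the one-dimensional robust CRT argument and adapt each step to the matrix setting. Throughout, set $\mathbf{k}_i=\mathbf{M}_i\mathbf{n}_i$.

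\emph{Step 1 (reducing to the reference index).} It suffices to determine $\mathbf{n}_{l_0}$ correctly. Once $\mathbf{M}_{l_0}\mathbf{n}_{l_0}$ is known, each remaining $\mathbf{M}_j\mathbf{n}_j$ follows from the congruence relations. Subtracting the $l_0$-th equation of (\ref{cong}) from the $j$-th gives
\begin{equation*}
\mathbf{M}_{l_0}\mathbf{n}_{l_0}-\mathbf{M}_j\mathbf{n}_j=\mathbf{r}_j-\mathbf{r}_{l_0}.
\end{equation*}
Write $\mathbf{M}_{l_0}=\mathbf{G}_{l_0,j}\mathbf{P}_j$ and $\mathbf{M}_j=\mathbf{G}_{l_0,j}\mathbf{Q}_j$, where $\mathbf{P}_j,\mathbf{Q}_j$ are left co-prime. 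Then $\mathbf{r}_j-\mathbf{r}_{l_0}\in\mathcal{L}(\mathbf{G}_{l_0,j})$. The observed difference $\tilde{\mathbf{r}}_j-\tilde{\mathbf{r}}_{l_0}$ equals this lattice vector plus $\Delta\mathbf{r}_j-\Delta\mathbf{r}_{l_0}$.

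\emph{Step 2 (decoding each pair).} For each $j\neq l_0$, compute the closest point $\hat{\mathbf{h}}_j\in\mathcal{L}(\mathbf{G}_{l_0,j})$ to $\tilde{\mathbf{r}}_j-\tilde{\mathbf{r}}_{l_0}$. This recovers $\mathbf{r}_j-\mathbf{r}_{l_0}$ exactly if and only if $\mathbf{0}$ is the closest lattice point to $\Delta\mathbf{r}_j-\Delta\mathbf{r}_{l_0}$, which is condition (\ref{iff}).

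From $\mathbf{P}_j\mathbf{n}_{l_0}-\mathbf{Q}_j\mathbf{n}_j=\mathbf{G}_{l_0,j}^{-1}\hat{\mathbf{h}}_j$ and co-primality, a Bezout identity for $(\mathbf{P}_j,\mathbf{Q}_j)$ yields a congruence
\begin{equation*}
\mathbf{n}_{l_0}\equiv \boldsymbol{\xi}_j \bmod \mathbf{P}_j'
\end{equation*}
for a suitable matrix $\mathbf{P}_j'$ obtained from the lcrm of $\mathbf{P}_j,\mathbf{Q}_j$. Applying Proposition~\ref{prop:MD-CRT} to these $L-1$ error-free congruences determines $\mathbf{n}_{l_0}$ uniquely within the FPD of their lcrm. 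I expect this lcrm to be $\mathbf{M}_{l_0}^{-1}\,\mathrm{lcrm}(\mathbf{M}_1,\dots,\mathbf{M}_L)$. The range condition (\ref{dynamicrange}) should then be exactly what is needed, since $\mathbf{n}_{l_0}=\lfloor\mathbf{M}_{l_0}^{-1}\mathbf{f}\rfloor$ because $\mathbf{r}_{l_0}\in\mathcal{N}(\mathbf{M}_{l_0})$.

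\emph{Step 3 (necessity).} If (\ref{iff}) fails for some $j$, the decoded difference is wrong by a nonzero lattice vector. I would exhibit a second admissible $\mathbf{f}'$ consistent with the same erroneous remainders, so that the $\mathbf{M}_i\mathbf{n}_i$ cannot be determined.

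\emph{Step 4 (sufficiency bound).} If $\|\Delta\mathbf{r}_i\|\le\tau$, then $\|\Delta\mathbf{r}_j-\Delta\mathbf{r}_{l_0}\|\le2\tau$. Let $\mathbf{h}\neq\mathbf{0}$ be any point of $\mathcal{L}(\mathbf{G}_{l_0,j})$. By the triangle inequality,
\begin{equation*}
\|\mathbf{h}-\mathbf{d}\|\ge\lambda-2\tau>2\tau\ge\|\mathbf{d}\|
\end{equation*}
whenever $\tau<\lambda/4$, where $\mathbf{d}=\Delta\mathbf{r}_j-\Delta\mathbf{r}_{l_0}$ and $\lambda=\lambda_{\mathcal{L}(\mathbf{G}_{l_0,j})}$. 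So (\ref{iff}) holds, and this argument works for any norm. The choice of $l_0$ in (\ref{l0}) makes the minimum over $j$ as large as possible, which gives (\ref{tau}).

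The main obstacle is Step 2: showing that the lcrm of the derived moduli $\mathbf{P}_j'$ coincides with $\mathbf{M}_{l_0}^{-1}\mathrm{lcrm}(\mathbf{M}_i)$. Because the matrices do not commute, lcrms and gclds have to be manipulated carefully. I would first try to characterize both sides as lattices, using $\mathcal{L}(\mathrm{lcrm})=\bigcap_i\mathcal{L}(\mathbf{M}_i)$, and then compare these intersections.
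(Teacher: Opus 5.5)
The paper imports this proposition from the cited work without proof. Your Steps 1, 2 and 4 are sound and match the reconstruction procedure it lists after the statement. By translation invariance, closest-point decoding of $\tilde{\mathbf r}_j-\tilde{\mathbf r}_{l_0}$ in $\mathcal L(\mathbf G_{l_0,j})$ returns $\mathbf r_j-\mathbf r_{l_0}$ exactly when \eqref{iff} holds. Your triangle-inequality estimate then gives \eqref{tau} in any norm.

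The gap is the necessity argument in Step 3. The ``only if'' here is a statement about that specific decoder: pairwise closest-point decoding against $l_0$, then the MD-CRT, then $\mathbf M_j\mathbf n_j=\mathbf M_{l_0}\mathbf n_{l_0}-\hat{\mathbf h}_j$. It is not a statement about all conceivable decoders. For a single error realization, ``some decoder recovers the $\mathbf M_i\mathbf n_i$'' is vacuous. Under suitable error models, other decoders can succeed when \eqref{iff} fails, for example when the moduli other than $\mathbf M_{l_0}$ already have the full lcrm and only $\tilde{\mathbf r}_{l_0}$ is corrupted.

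So exhibiting a second admissible $\mathbf f'$ does not prove the claim. With no constraint on the errors, every $\mathbf f'$ is consistent with the observed $\tilde{\mathbf r}_i$. If you instead require comparable errors, moving only the $j$-th remainder by the offending $\mathbf h\in\mathcal L(\mathbf G_{l_0,j})$ needs $\mathbf h\in\mathcal L(\mathrm{gcld}(\mathbf M_j,\mathbf M_k))$ for every $k\neq j$. That generally fails.

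The correct argument is short. If \eqref{iff} fails for some $j$, then $\hat{\mathbf h}_j\neq\mathbf r_j-\mathbf r_{l_0}$. Then one of the following happens:
\begin{itemize}
\item the congruence system for $\mathbf n_{l_0}$ has no solution;
\item it returns a wrong $\mathbf n_{l_0}$;
\item it returns the true one, and then the output $\mathbf M_{l_0}\mathbf n_{l_0}-\hat{\mathbf h}_j$ differs from $\mathbf M_j\mathbf n_j$ by the nonzero lattice vector $\hat{\mathbf h}_j-(\mathbf r_j-\mathbf r_{l_0})$.
\end{itemize}
In every case the $\mathbf M_i\mathbf n_i$ are not all recovered.

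The obstacle you flag in Step 2 is not a real one. We have $\mathcal L(\mathbf P_j^{-1}\mathrm{lcrm}(\mathbf P_j,\mathbf Q_j))=\mathbf M_{l_0}^{-1}\bigl(\mathcal L(\mathbf M_{l_0})\cap\mathcal L(\mathbf M_j)\bigr)$. Intersecting over $j\neq l_0$ gives $\mathbf M_{l_0}^{-1}\bigcap_i\mathcal L(\mathbf M_i)=\mathcal L(\mathbf M_{l_0}^{-1}\mathrm{lcrm}(\mathbf M_1,\ldots,\mathbf M_L))$.

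The paper's Step 3 avoids the issue entirely. It solves for $\mathbf x=\mathbf M_{l_0}\mathbf n_{l_0}$ from $\mathbf x\equiv\hat{\mathbf h}_j\bmod\mathbf M_j$ and $\mathbf x\equiv\mathbf 0\bmod\mathbf M_{l_0}$. The MD-CRT then runs on the original moduli, and \eqref{dynamicrange} is simply $\mathbf x\in\mathcal N(\mathrm{lcrm}(\mathbf M_1,\ldots,\mathbf M_L))$.
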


We now outline the procedure for robustly reconstructing $\mathbf{f}$ from the erroneous vector remainders $\{\tilde{\mathbf{r}}_i\}_{i=1}^L$ . 
\begin{description}[leftmargin=1.3cm]
  \item[Step 1:] Compute all $\mathbf{G}_{i,j}$ and their lattice minimum
  distances $\lambda_{\mathcal{L}(\mathbf{G}_{i,j})}$ for $1 \leq i\neq j \leq L$. Select an index $l_0$ that satisfies the condition in (\ref{l0}).

  \item[Step 2:] For each $j\neq l_0$, compute the vector $\mathbf{v}_j$ as 
  \begin{equation*}
      \mathbf{v}_j=\arg\min_{\mathbf{v}\in \mathcal{L}(\mathbf{G}_{l_0,j})} \|\mathbf{v}-(\tilde{\mathbf{r}}_j-\tilde{\mathbf{r}}_{l_0})\|.
  \end{equation*}

  \item[Step 3:] Compute $\mathbf{M}_{l_0}\mathbf{n}_{l_0} \in \mathcal{N}(lcrm(\mathbf{M}_1,\mathbf{M}_2,\cdots,\mathbf{M}_L))$ by applying the MD-CRT to solve the following system of congruences:
  \begin{equation}\notag
    \begin{aligned}
      \mathbf{M}_{l_0}\mathbf{n}_{l_0} &\equiv \mathbf{v}_{1} \mod{\mathbf{M}_{1}}, \\
      &\vdotswithin{\equiv}\\
      \mathbf{M}_{l_0}\mathbf{n}_{l_0} &\equiv \mathbf{v}_{l_0-1} \mod{\mathbf{M}_{l_0-1}}, \\
      \mathbf{M}_{l_0}\mathbf{n}_{l_0} &\equiv \mathbf{0} \mod{\mathbf{M}_{l_0}}, \\\mathbf{M}_{l_0}\mathbf{n}_{l_0} &\equiv \mathbf{v}_{l_0+1} \mod{\mathbf{M}_{l_0+1}}, \\
       &\vdotswithin{\equiv}\\
      \mathbf{M}_{l_0}\mathbf{n}_{l_0} &\equiv \mathbf{v}_{L} \mod{\mathbf{M}_{L}}.
    \end{aligned}
\end{equation}

   \item[Step 4:] For each $1 \leq j\leq L$ with $j\neq l_0$, Compute \begin{equation*}
       \mathbf{M}_j\mathbf{n}_j=\mathbf{M}_{l_0}\mathbf{n}_{l_0}-\mathbf{v}_j.
   \end{equation*}

   \item[Step 5:] Estimate $\mathbf{f}$ by averaging 
   \begin{equation*}
       \tilde{\mathbf{f}}=\frac{1}{L}\sum_{i=1}^{L}(\mathbf{M}_i\mathbf{n}_i+\tilde{\mathbf{r}}_i).
   \end{equation*}
\end{description}

The goal of the robust MD-CRT is to ensure that if the errors in the vector remainders are bounded by a parameter $\tau$, then the error in the reconstructed vector is also bounded by the same $\tau$.
Two questions arise in the robust MD-CRT.
The first concerns the dynamic range: how many integer vectors $\mathbf{f}$ can be robustly reconstructed? The second is about the robustness level: how large can the vector remainder error bound $\tau$ be that still guarantees robust reconstruction? 

For the first question, Proposition \ref{robust} shows that any integer vector $\mathbf{f}$ satisfying (\ref{dynamicrange}) can be robustly reconstructed if and only if condition \eqref{iff} is satisfied. Such an $\mathbf{f}$ can be uniquely decomposed as 
\begin{equation}\label{decomp}
    \mathbf{f}=\mathbf{M}_{l_0}\left\lfloor \mathbf{M}_{l_0}^{-1} \mathbf{f} \right\rfloor +\mathbf{r},
\end{equation}
where $\mathbf{r}\in \mathcal{N}(\mathbf{M}_{l_0})$. Hence, each pair $(\left\lfloor \mathbf{M}_{l_0}^{-1} \mathbf{f} \right\rfloor,\mathbf{r})$ uniquely determines an $\mathbf{f}$. According to (\ref{dynamicrange}), the number of possible vectors for $\left\lfloor \mathbf{M}_{l_0}^{-1} \mathbf{f} \right\rfloor$ is $$\frac{|\det(\text{lcrm}(\mathbf{M}_1,\mathbf{M}_2,\cdots,\mathbf{M}_L))|}{|\det(\mathbf{M}_{l_0})|},$$ and the total number of possible such vectors $\mathbf{r}$ is $|\det(\mathbf{M}_{l_0})|$. 
Therefore, the total number of vectors that can be robustly reconstructed is $|\det(\text{lcrm}(\mathbf{M}_1,\mathbf{M}_2,\cdots,\mathbf{M}_L))|$, which coincides with the dynamic range of the MD-CRT with the same set of matrix moduli. 
However, this does not mean that the set of vectors that can be robustly reconstructed with the robust MD-CRT, called \textit{robustly determinable range}, is the same as the set of vectors that can be uniquely determined by the MD-CRT, called \textit{uniquely determinable range}\footnote{For the conventional 1D-CRT, this uniquely determinable range is the same as the dynamic range. However, for the MD-CRT in multidimensional cases, it is the set of all uniquely determinable integer vectors from the vector remainders and different from the dynamic range.}. In fact, the robustly determinable range of the robust MD-CRT dose not, in general, form an FPD of any matrix, unlike that of MD-CRT, which is analyzed in detail in a later section. 

For the second question. According to Proposition \ref{robust}, a sufficient condition for robust reconstruction is given in (\ref{tau}).  
This condition shows that the robustness level is determined by the chosen matrix moduli. More Specifically, the robustness level is limited by the smallest minimum distance among all the lattices generated by the gclds of the chosen matrix with the other matrices. This motivates the question: can non-diagonal moduli yield a larger $\tau$ than diagonal ones under the same determinant constraint? We address this next.

\subsection{Robustness Advantage of Non-Diagonal Matrix Moduli}

In the previous section, comparing to diagonal matrix moduli we have shown that non-diagonal matrices do not increase the dynamic range of MD-CRT. For the robust MD-CRT, the dynamic range remains the same when the moduli are unchanged. The key performance index here is the vector remainder error bound $\tau$, which depends on the shortest vectors of certain lattices. We now ask whether non-diagonal matrices can yield a larger vector remainder error bound $\tau$ than diagonal matrices. 
Since solving the shortest vector problem (SVP) in lattices generated by integer matrices is computationally challenging in general, we only consider the two-dimensional case here to illustrate the potential advantages of non-diagonal matrices over diagonal matrices as moduli in the robust MD-CRT.

A direct comparison is difficult because~\eqref{tau} involves multiple
$\mathrm{gcld}$ lattices. To solely investigate the robustness, we adopt a left-factor model. Given a set of pairwise co-prime matrix moduli $\Gamma_1,\Gamma_2,\cdots,\Gamma_L$, consider new matrix moduli
$\mathbf{M}\Gamma_1,\mathbf{M}\Gamma_2,\cdots,\mathbf{M}\Gamma_L$. In this setting, the vector remainder error bound $\tau$ of the new matrix moduli equals $\lambda_{\mathcal{L}(\mathbf{M})}/4$ according to \eqref{tau}.

Since each modulus has a determinant constraint, $|\det(\mathbf{M})|$ is also upper bounded. 
Assume $|\det(\mathbf{M})|\le p$, where $p$ is a prime integer. The key question then becomes: does there exist a non-diagonal matrix $\mathbf{M}$ with $|\det(\mathbf{M})|\leq p$ such that the length of the shortest vector in the lattice generated by $\mathbf{M}$ is strictly larger than that of any diagonal matrix under the same constraint?   

We examine all $2\times2$ integer matrices of determinant $p$.
Let $$\mathbf{N}_{i}=\begin{pmatrix}
    1 &0\\
    i &p\\
\end{pmatrix},$$ for $i=0,1,2,\cdots,p-1$, where $p$ is a prime integer. 
Each $\mathbf{N}_i$ is in Hermite normal form (HNF), and together with $\mathrm{diag}(p,1)$ they represent all distinct HNFs of $2\times2$ integer matrices with determinant $p$. Matrices with the same HNF generate the same lattice. Hence all the possible lattices generated by all $2\times2$ integer matrices with determinant $p$ are exactly those generated by $\mathbf{N}_i$ for $0\le i<p$  and $\mathrm{diag}(p,1)$. 
The diagonal matrix $\mathrm{diag}(p,1)$ always generates a lattice whose shortest non-zero vector has length $1$, offering no robustness advantage. 
We therefore focus on $\mathbf{N}_i$ with $0 \leq i < p$.  

For the lattices generated by $\mathbf{N}_i$ with $0 \le i < p$, two natural questions arise: which lattice achieves the maximum shortest vector length, and what is the value of this maximum shortest vector length?  
In the following, we derive a general algorithm to compute $\max_{0 \leq i < p}\lambda_{\mathcal{L}(\mathbf{N}_i)}$ and the corresponding index $i$.  

The key idea is to find the smallest integer $d$ such that the set of integer vectors
\begin{equation}\label{eq:set(x,y)}
    \{[x,y]^{\top}| x^2+y^2\leq d\}\setminus \{[0,0]^{\top}\}
\end{equation} 
contains at least one vector from each lattice $\mathcal{L}(\mathbf{N}_i)$, $0 \leq i < p$. Then, $\sqrt{d}$ equals the
maximum shortest non-zero vector length over these lattices, i.e.,
\[
  \sqrt d \;=\; \max_{0\le i < p}\lambda_{\mathcal L(\mathbf N_i)}.
\] 

Since $[1,i]^{\top}\in\mathcal{L}(\mathbf{N}_i)$ for every 
$0\le i\le p-1$, the shortest nonzero vector length of 
$\mathcal{L}(\mathbf{N}_i)$ satisfies
\[
\lambda^2_{\mathcal L(\mathbf N_i)}\leq \|[1,i]^{\top}\|^2_2 
   =1+i^2 \leq 1+(p-1)^2 < p^2.
\]
Hence, it suffices to search for the maximum shortest vector length over 
$d < p^2$.
Moreover, since $[0,p]^{\top}$ is a basis vector of 
$\mathcal{L}(\mathbf{N}_i)$, the shortest nonzero lattice 
vector $[x,y]^{\top}$ with $x=0$ has length $p$, which is greater than $\lambda_{\mathcal L(\mathbf N_i)}$. If $[x,y]^{\top}
\in\mathcal{L}(\mathbf{N}_i)$, then so does 
$[-x,-y]^{\top}$. Therefore, it is sufficient to restrict 
the search to integer vectors $[x,y]^{\top}$ with 
$0 < x < p$ in the set~(\ref{eq:set(x,y)}).

For each lattice $\mathcal{L}(\mathbf{N}_i)$, every one of its lattice point has the form $[m,im+np]^{\top}$, where $m$ and $n$ are integers. Equivalently, $[x,y]^{\top}\in\mathcal{L}(\mathbf{N}_i)$ iff $xi\equiv y \mod{p}$.
When $p$ is prime, $\mathbb Z_p$ is a field under modulo-$p$ operations, so every nonzero $x\in\mathbb Z_p$ has a unique inverse $x^{-1}$. Hence, for any $[x,y]^{\top}$ with $0<x<p$ there exists a unique $i\in\mathbb Z_p$ such that $i\equiv yx^{-1} \mod p$ and $[x,y]^{\top}\in\mathcal{L}(\mathbf{N}_i)$.
This leads to the following algorithm to compute $\max_{i}\lambda_{\mathcal{L}(\mathbf{N}_i)}$ and the
corresponding maximizers $\arg\max_{i}\lambda_{\mathcal{L}(\mathbf{N}_i)}$.

\begin{description}[leftmargin=1.3cm]
  \item[Step 1:] Initialize $d=1$, index set
    $\mathcal S=\varnothing$. Enumerate all possible integer pairs $(x,y)$ such that $x^2+y^2=d$ and $0< x <p$.
  
  \item[Step 2:] For each pair $(x,y)$, calculate $i \equiv yx^{-1} \mod{p}$ and include $i$ in the set $\mathcal{S}_{\text{new}}$ for all the newly identified indices $i$ at the current step.  
  
  \item[Step 3:] Update the cumulative set $\mathcal{S}$ by $\mathcal{S}\cup\mathcal{S}_{\text{new}}$. If $|\mathcal{S}|=p$, i.e., $\mathcal{S}=\mathbb{Z}_p$, then the algorithm terminates, and the current distance $\sqrt{d}$ equals the maximum of the shortest vector length among all lattices, i.e., $\sqrt{d}=\max_{i}\lambda_{\mathcal{L}(\mathbf{N}_i)}$. The set $\mathcal{S}_{\text{new}}$ identified at this final step therefore corresponds to the indices of the lattices achieving this maximum shortest vector length, i.e., $\mathcal{S}_{\text{new}}=\arg\max_{i}\lambda_{\mathcal{L}(\mathbf{N}_i)}$. Otherwise, let $d=d+1$ and repeat the above steps.
\end{description}

Based on the above discussion, the procedure is guaranteed to terminate with $d< p^{2}$. A detailed algorithm is presented in Algorithm \ref{al1}.

\begin{algorithm}[htbp]
\caption{Finding the maximum shortest non-zero vector length among lattices $\mathcal{L}(\mathbf{N}_i)$, $0 \leq i < p$}
\label{al1}
\KwIn{A prime integer $p$}
\KwOut{The maximum shortest non-zero vector length $\sqrt d$ and a set of indices $i$  achieving it} 
Initialize $d \gets 1$, $\mathcal{S} \gets \emptyset$\;

\While{$|\mathcal{S}| < p$}{
    $\mathcal{S}_{\text{new}} \gets \emptyset$\;

    \For{$x \gets 1$ \KwTo $\lfloor \sqrt{d} \rfloor$}{
        $y^2 \gets d - x^2$\;

        \If{$y^2 < 0$ \textbf{or} $y^2$ is not a perfect square}{
            \textbf{continue}\;
        }

        $y \gets \sqrt{y^2}$\;

        \ForEach{$y' \in \{-y, +y\}$}{
            {
                $a \gets (y' \cdot x^{-1}) \bmod p$\;
                \If{$a \notin \mathcal{S}$}{
                    $\mathcal{S}_{\text{new}} \gets \mathcal{S}_{\text{new}} \cup \{a\}$\;
                }
            }
        }
    }

    $\mathcal{S} \gets \mathcal{S} \cup \mathcal{S}_{\text{new}}$\;

    \If{$|\mathcal{S}| = p$}{
        \Return{$\sqrt{d}, \mathcal{S}_{\text{new}}$}\;
    }

    $d \gets d + 1$\;
}
\end{algorithm}

For a diagonal matrix $\mathbf{D}=\mbox{diag}(a,b)$, the length of the shortest vector of $\mathcal{L}(\mathbf{D})$ is $\min\{|a|,|b|\}$. Since the determinant of $D$ is less than or equal to $p$, for a prime integer $p$, when $a=b=\lfloor \sqrt{p} \rfloor$, the length of the shortest vector achieves the maximum. In this case, the maximum possible length of the shortest vector of the lattice generated by diagonal matrices with determinant constraint $p$ is $\lfloor \sqrt{p} \rfloor$.

Based on Algorithm \ref{al1}, we compute the solutions for all prime integers $p<100{,}000$. In all cases, the maximum shortest vector length of the lattice generated by $\mathbf{N}_i$ for $1 \leq i < p$ is strictly greater than $ \lfloor\sqrt{p}\rfloor$. 

\begin{example}
Consider $p=3257$. For diagonal matrices, the maximum shortest vector length is $\lfloor \sqrt{p} \rfloor = 57$. In contrast, applying Algorithm~\ref{al1} shows that $\mathbf{N}_{971}$ yield lattices whose shortest vector length is $\sqrt{3730} \approx 61.07$.  
\end{example}

The above results illustrate that non-diagonal matrix moduli can provide a clear robustness advantage over diagonal ones.  
This implies that non-diagonal matrix moduli in robust MD-CRT, i.e., non-separable robust MD-CRT, outperforms the diagonal matrix moduli in robust MD-CRT, i.e., separable robust MD-CRT. To further improve the robustness of robust MD-CRT, we next turn to the second focus of this paper: how to further enhance the robustness of the robust MD-CRT through a multi-stage framework similar to the robust 1D-CRT studied in \cite{xiao2014}. As mentioned in Introduction, due to the matrix forms and the uniquely determinable range is different from the dynamic range in the multidimensional case, it is not a straightforward extension from the robust 1D-CRT in \cite{xiao2014} as we shall see later.  

\section{Multi-Stage Robust MD-CRT}\label{s5}

In the MD-CRT, in order to achieve a dynamic range as large as possible, we usually select as many pairwise co-prime integer matrix moduli as possible. However, if all the moduli are pairwise co-prime, then every gcld $\mathbf{G}_{i,j}$ of a pair matrix moduli $\mathbf{M}_i$ and $\mathbf{M}_j$ is a unimodular matrix and the corresponding lattice $\mathcal{L}(\mathbf{G}_{i,j})$ has only the minimum distance $\lambda_{\mathcal{L}(\mathbf{G}_{i,j})}=1$. From (\ref{tau}), this implies that the vector remainder error bound satisfies $\tau < 1/4$. Since vector remainders are integer vectors, no error can be tolerated in this case. 

Hence, to make the MD-CRT robust, some redundancy among the matrix moduli is necessary similar to that of the robust 1D-CRT.
According to Proposition \ref{robust}, there must be at least one matrix modulus that is not co-prime with any of the remaining matrix moduli. Otherwise, if for any matrix moduli there exists another matrix that is co-prime with it, then the error in the vector remainder cannot be corrected.

In practical applications, the sampling rates, i.e., the absolute determinant values of matrix moduli, are often constrained by hardware limits. Given a fixed set of matrix moduli, there are two possible strategies to improve robustness. The first approach is to reduce the dynamic range in exchange for a larger vector remainder error bound $\tau$, as what was studied in \cite{xiao17} for robust 1D-CRT. 
The second, which is the focus of this section, is to enhance robustness without reducing the dynamic range through a multi-stage reconstruction framework as what is studied for robust 1D-CRT in \cite{xiao2014}. 
The first approach remains an interesting open problem for future research.   

A key advantage of the multi-stage approach is that robust reconstruction can still be achieved even in the case where, for every matrix modulus, there exists another matrix modulus that is co-prime with it. This corresponds exactly to the scenario in which the single-stage robust MD-CRT fails as in Proposition~\ref{robust}. Moreover, in some cases where the original set of matrix moduli already allows robust reconstruction, the proposed multi-stage framework provides an even higher level of robustness, i.e., a larger vector remainder error bound $\tau$. We next present an example to illustrate the basic idea of the multi-stage framework.

\subsection{Motivating Example}
Let 
\begin{equation}\notag
\begin{aligned}
   \mathbf{\Gamma}_1 &=\begin{pmatrix}
       22 &-17\\
       17 &22\\
   \end{pmatrix}, 
   \mathbf{A}_1=\begin{pmatrix}
       16 &0\\
       1 &16\\
   \end{pmatrix}, 
   \mathbf{A}_2=\begin{pmatrix}
       16 &1\\
       0 &16\\
   \end{pmatrix},\\
   \mathbf{\Gamma}_2 &=\begin{pmatrix}
       22 &17\\
       -17 &22\\
   \end{pmatrix}, 
   \mathbf{B}_1=\begin{pmatrix}
       24 &0\\
       1 &24\\
   \end{pmatrix}, 
   \mathbf{B}_2=\begin{pmatrix}
       24 &1\\
       0 &24\\
   \end{pmatrix}.
\end{aligned}
\end{equation}
Consider the following six matrix moduli:
\begin{equation}\notag
\begin{aligned}
   \mathbf{M}_1 &= \mathbf{\Gamma}_1,\   
   \mathbf{M}_2=\mathbf{\Gamma}_1\mathbf{A}_1,\  
   \mathbf{M}_3=\mathbf{\Gamma}_1\mathbf{A}_2,\\
   \mathbf{M}_4 &= \mathbf{\Gamma}_2,\   
   \mathbf{M}_5=\mathbf{\Gamma}_2\mathbf{B}_1,\ 
   \mathbf{M}_6=\mathbf{\Gamma}_2\mathbf{B}_2.
\end{aligned}
\end{equation}
Since $\mathbf{\Gamma}_1$ and $\mathbf{\Gamma}_2$ are co-prime, we can easily check that for any matrix $\mathbf{M}_i$, $1 \leq i \leq 6$, there exists a matrix $\mathbf{M}_j$, $1 \leq j \neq i \leq 6$, such that they are co-prime. According to Proposition~\ref{robust}, the vector remainder error bound satisfies $\tau<1/4$, so the system is not robust. We can see that $\mathbf{M}_1,\mathbf{M}_2,\mathbf{M}_3$ share a common divisor $\mathbf{\Gamma}_1$ and $\mathbf{M}_4,\mathbf{M}_5,\mathbf{M}_6$ share a common divisor $\mathbf{\Gamma}_2$ and since $\mathbf{\Gamma}_1$ and $\mathbf{\Gamma}_2$ are co-prime, these 6 matrices do not have a common divisor that is non-unimodular.
This observation motivates dividing the moduli into two groups, $\{\mathbf{M}_1,\mathbf{M}_2,\mathbf{M}_3\}$, $\{\mathbf{M}_4,\mathbf{M}_5,\mathbf{M}_6\}$ and then apply the single-stage robust MD-CRT to each group separately. 

Before introducing how single-stage robust MD-CRT is applied, we first review how the MD-CRT operates when the matrix moduli are divided into several subgroups. 

Let the vector remainder of the unknown integer vector $\mathbf{f}$ modulo $\mathbf{M}_i$ be $\mathbf{r}_i$, for $1 \leq i \leq 6$. For the first group of matrix moduli, $\{\mathbf{M}_1,\mathbf{M}_2,\mathbf{M}_3\}$, if there exists an lcrm of $\mathbf{M}_1,\mathbf{M}_2,\mathbf{M}_3$, denoted by $\mathbf{R}_1$, such that the unknown vector $\mathbf{f}\in \mathcal{N}(\mathbf{R}_1)$, we can uniquely determine this vector $\mathbf{f}$. Otherwise, we can only obtain $\mathbf{f}_1\in \mathcal{N}(\mathbf{R}_1)$ that satisfies
\begin{equation}\label{r1}
    \mathbf{f}\equiv \mathbf{f}_1 \mod{\mathbf{R}_1}.
\end{equation}
Similarly, for the second group of matrix moduli, $\{\mathbf{M}_4,\mathbf{M}_5,\mathbf{M}_6\}$, if there exists an lcrm of $\mathbf{M}_4,\mathbf{M}_5,\mathbf{M}_6$, denoted by $\mathbf{R}_2$, such that the unknown vector $\mathbf{f}\in \mathcal{N}(\mathbf{R}_2)$, we can uniquely determine this vector $\mathbf{f}$. Otherwise, we can only obtain $\mathbf{f}_2\in \mathcal{N}(\mathbf{R}_2)$ that satisfies
\begin{equation}\label{r2}
    \mathbf{f}\equiv \mathbf{f}_2 \mod{\mathbf{R}_2}.
\end{equation}

Combining these two stages, the task becomes recovering $\mathbf{f}$ from a new set of matrix congruences (\ref{r1}) and (\ref{r2}). According to MD-CRT, $\mathbf{f}$ can be uniquely determined if and only if it is in the FPD of an lcrm of $\mathbf{R}_1$ and $\mathbf{R}_2$. If we instead apply the MD-CRT directly on all the six matrix moduli, $\mathbf{f}$ can be uniquely determined if and only if it is in the FPD of an lcrm of all the matrices $\mathbf{M}_i$ for $1 \leq i \leq 6$. Since $\mathbf{R}_1$ is an lcrm of $\mathbf{M}_1,\mathbf{M}_2,\mathbf{M}_3$ and $\mathbf{R}_2$ is an lcrm of $\mathbf{M}_4,\mathbf{M}_5,\mathbf{M}_6$, any lcrm of $\mathbf{R}_1$ and $\mathbf{R}_2$ is also an lcrm of $\mathbf{M}_i$ for $1 \leq i \leq 6$. Hence, in the error-free case, separately applying the MD-CRT to each subgroup produces the same reconstruction as applying it to the entire set of moduli.

Now, consider the case with erroneous vector remainders $\tilde{\mathbf{r}}_i$ for $1 \leq i \leq 6$ and denote the vector remainder error bound by $\tau$. For the first group of matrix moduli, $\{\mathbf{M}_1,\mathbf{M}_2,\mathbf{M}_3\}$, the gcld of any two matrices is $\mathbf{\Gamma}_1$. Therefore, condition~(\ref{l0}) in Proposition~\ref{robust} is satisfied for any choice of $l_0$ among the three matrices. Without loss of generality, let $l_0=1$. 
Then the unknown integer vector $\mathbf f$ must satisfy
\begin{equation}\label{dyn1}
\left\lfloor \mathbf{M}_1^{-1}\mathbf f \right\rfloor \in 
\mathcal N\!\left(\mathbf{M}_1^{-1}\mathbf R_1\right),
\end{equation}
where $\mathbf R_1$ is an lcrm of $\mathbf M_1,\mathbf M_2,\mathbf M_3$.  
If (\ref{dyn1}) holds, then by Proposition~\ref{robust}, 
robust reconstruction is guaranteed whenever $\tau < \lambda_{\mathcal{L}(\mathbf{\Gamma}_1)}/4 = \sqrt{773}/4$ using $\tilde{r}_i$, $i=1,2,3$.
In this case, we can compute an estimate $\tilde{\mathbf f}_1$ such that 
$\|\mathbf{f}-\tilde{\mathbf{f}}_1\| \leq \tau$.

For the second group of matrix moduli, $\{\mathbf{M}_4,\mathbf{M}_5,\mathbf{M}_6\}$, the gcld of any two matrices is $\mathbf{\Gamma}_2$.  
Therefore, condition~(\ref{l0}) in Proposition~\ref{robust} is satisfied for any choice of $l_0$ among the three matrices. Without loss of generality, let $l_0=4$. The unknown integer vector $\mathbf f$ must satisfy
\begin{equation}\label{dyn2}
\left\lfloor \mathbf{M}_4^{-1}\mathbf f \right\rfloor \in 
\mathcal N\!\left(\mathbf{M}_4^{-1}\mathbf R_2\right),
\end{equation}
where $\mathbf R_2$ is an lcrm of $\mathbf M_4,\mathbf M_5,\mathbf M_6$.  If (\ref{dyn2}) holds, then by Proposition~\ref{robust}, robust reconstruction is guaranteed when 
$\tau < \lambda_{\mathcal{L}(\mathbf{\Gamma}_2)}/4 = \sqrt{773}/4$ using $\tilde{r}_i$, $i=4,5,6$.  
In this case, we can compute an estimate $\tilde{\mathbf f}_2$ such that 
$\|\mathbf{f}-\tilde{\mathbf{f}}_2\| \leq \tau$.

In the second stage, the situation becomes even more complicated. Ideally, one would like to reduce the problem to the congruences 
$\mathbf f \equiv \mathbf f_1 \mod{\mathbf R_1}$ and 
$\mathbf f \equiv \mathbf f_2 \mod{\mathbf R_2}$ with erroneous vector remainders $\tilde{\mathbf{f}}_1,\tilde{\mathbf{f}}_2$.
If this was possible, then the reconstruction task in the second 
stage would fall exactly into the framework of the single-stage 
robust MD-CRT with two matrix moduli. However, to be able to apply the single-stage robust MD-CRT in the second stage,
an additional difficulty arises in the multidimensional robust setting as follows.

For the MD-CRT, given any unknown integer vector $\mathbf f$, 
even if $\mathbf f \notin \mathcal N(\mathbf R_1)$, 
we can always obtain a unique vector 
$\mathbf f_1 \in \mathcal N(\mathbf R_1)$ with the error-free vector remainders $\mathbf{r}_1,\mathbf{r}_2,\mathbf{r}_3$ such that 
$\mathbf f \equiv \mathbf f_1 \mod{\mathbf R_1}$.
Consequently, when the moduli are grouped, 
each subgroup produces an intermediate vector 
that can be treated as a new vector remainder 
for the next reconstruction stage  as explained earlier.

However, for the single-stage robust MD-CRT, this property no longer holds in general. 
The reason is that the robustly determinable range \eqref{dyn1} is usually \emph{not} an FPD of any integer matrix.  
In such cases, given the erroneous vector remainders $\tilde{\mathbf r}_1,\tilde{\mathbf r}_2,\tilde{\mathbf r}_3$, 
the reconstruction algorithm may output a vector 
$\tilde{\mathbf f}_1$ that has no corresponding 
$\mathbf f_1 \in \mathcal N(\mathbf R_1)$ 
satisfying both 
$\mathbf f \equiv \mathbf f_1 \mod{\mathbf R_1}$ 
and 
$\|\mathbf f_1-\tilde{\mathbf f}_1\|\le\tau$.
If $\tilde{\mathbf f}_1$ cannot be expressed 
as an erroneous version of the true vector remainder 
$\mathbf f_1 \in \mathcal N(\mathbf R_1)$, 
then it cannot serve as a valid vector remainder input for the next stage, 
because the congruence relation 
$\mathbf f \equiv \mathbf f_1 \mod{\mathbf R_1}$ 
itself is lost. As a result, we cannot directly cascade the single-stage robust MD-CRT into multiple stages.

In contrast, this issue never occurs 
in the one-dimensional case. 
When $D=1$, $M_1^{-1}R_1$ is an integer, 
so the robustly determinable range 
reduces to the set $\{0,1,\cdots,R_1-1\}$, 
which is the one-dimensional FPD $\mathcal{N}(R_1)$. 
Therefore, every integer $f$ admits a unique representative 
modulo $R_1$ even in the robust case, 
and the remainder estimated in the first stage 
can always serve as a valid input 
for the next reconstruction stage. 
It is the loss of FPD structure in higher dimensions that makes the multidimensional robust extension nontrivial and motivates the multi-stage framework developed in the following subsection.

\subsection{A Sufficient Condition for Multi-Stage Robustness}

As discussed in the previous subsection, a key requirement for the multi-stage framework, which applies to every pair of consecutive stages rather than only the first, is that the robustly determinable range at every intermediate stage (i.e., all stages except the final one) must form an FPD of some integer matrix.
Only in this case, the output of one stage can serve as a well-defined input for the next stage.
However, according to Proposition~\ref{robust}, the robustly determinable range in a single stage does not always have this structure.

We next characterize the structure of the robustly determinable range \eqref{dynamicrange} in Proposition~\ref{robust}.
As proposed above, any vector in the robustly determinable range has a unique decomposition given in (\ref{decomp}). This decomposition shows that such a vector $\mathbf{f}$ always lies in a shifted version of the FPD $\mathcal{N}(\mathbf{M}_{l_0})$. The shift is determined by the integer vector $\left\lfloor \mathbf{M}_{l_0}^{-1} \mathbf{f} \right\rfloor$, which belongs to the FPD $\mathcal{N} \left( \mathbf{M}_{l_0}^{-1} \text{lcrm}(\mathbf{M}_1, \mathbf{M}_2, \cdots, \mathbf{M}_L) \right)$.
More specifically, when $\left\lfloor \mathbf{M}_{l_0}^{-1} \mathbf{f} \right\rfloor=\mathbf{0}$, $\mathbf{f} \in \mathcal{N}(\mathbf{M}_{l_0})$. When $\left\lfloor \mathbf{M}_{l_0}^{-1} \mathbf{f} \right\rfloor=a\mathbf{e}_i$ for some integer $a$, the vector $\mathbf{f}$ lies in a shifted version of $\mathcal{N}(\mathbf{M}_{l_0})$ by $a\mathbf{m}_i$, where $\mathbf{m}_i$ is the $i$-th column of $\mathbf{M}_{l_0}$. That is, the FPD is shifted by $a\|\mathbf{m}_i\|_2$ along the direction of $\mathbf{m}_i$. 
More generally, when $\left\lfloor \mathbf{M}_{l_0}^{-1} \mathbf{f} \right\rfloor=\sum_{i=1}^{D}a_i\mathbf{e}_i$, the vector $\mathbf{f}$ lies in a shifted version of $\mathcal{N}(\mathbf{M}_{l_0})$ by $\sum_{i=1}^{D}a_i\mathbf{m}_i$. In this case, the shift occurs along each direction $\mathbf{m}_i$ with the distance $a_i\|\mathbf{m}_i\|_2$.

Since each shift moves the FPD by integer multiples of the length of a basis vector $\mathbf{m}_i$ in its direction for $1 \leq i \leq D$, the resulting shifted regions are disjoint. In other words, no two shifted FPDs overlap. Therefore, the entire region consists of $$\frac{|\det(\text{lcrm}(\mathbf{M}_1,\mathbf{M}_2,\cdots,\mathbf{M}_L))|}{|\det(\mathbf{M}_{l_0})|}$$ disjoint shifted versions of $\mathcal{N}(\mathbf{M}_{l_0})$, and the robustly determinable range in \eqref{dynamicrange} can be represented as 
\begin{equation}\label{union}
    \bigcup_{\mathbf{k} \in \mathcal{N}(\mathbf{M}_{l_0}^{-1}\text{lcrm}(\mathbf{M}_1,\mathbf{M}_2,\cdots,\mathbf{M}_L))} \big{(} \mathcal{N}(\mathbf{M}_{l_0}) + \mathbf{M}_{l_0}\mathbf{k} \big{)}.
\end{equation}
Note that different choices of $\text{lcrm}(\mathbf{M}_1,\mathbf{M}_2,\cdots,\mathbf{M}_L)$ in (\ref{union}), or \eqref{dynamicrange}, lead to different robustly determinable ranges.

We now give a simple example to show that the robustly determinable range in a single stage is not always an FPD. This motivates the following question: under what conditions is this union of disjoint shifted FPDs exactly a single FPD of some matrix?

\begin{example}\label{ex:shiftedFPD}
    Let 
    \begin{equation*}
        \mathbf{M}_1 =\begin{pmatrix}
        3 &1\\
        2 &2\\
        \end{pmatrix},
        \mathbf{M}_2 =\begin{pmatrix}
        2 &2\\
        1 &3\\
        \end{pmatrix}, 
    \end{equation*}
    we can calculate their lcrm \cite{MD2,PPV1} as 
    \begin{equation*}
        \mathbf{R}=\begin{pmatrix}
            4 &0\\
            0 &4\\
        \end{pmatrix}.
    \end{equation*}
    Then, we have $$\mathcal{N}(\mathbf{M}_1^{-1}\mathbf{R})=\{[0,0]^{\top},[1,0]^{\top},[0,1]^{\top},[1,-1]^{\top}\}.$$ The corresponding robustly determinable range is shown in Fig. \ref{fig:shiftedFPD}. This figure shows exactly four shifted copies of $\mathcal{N}(\mathbf{M}_1)$ and the dashed lines and hollow circles are not included in the robustly determinable range.
\begin{figure}[htbp]
    \centering
    \includegraphics[width=\columnwidth]{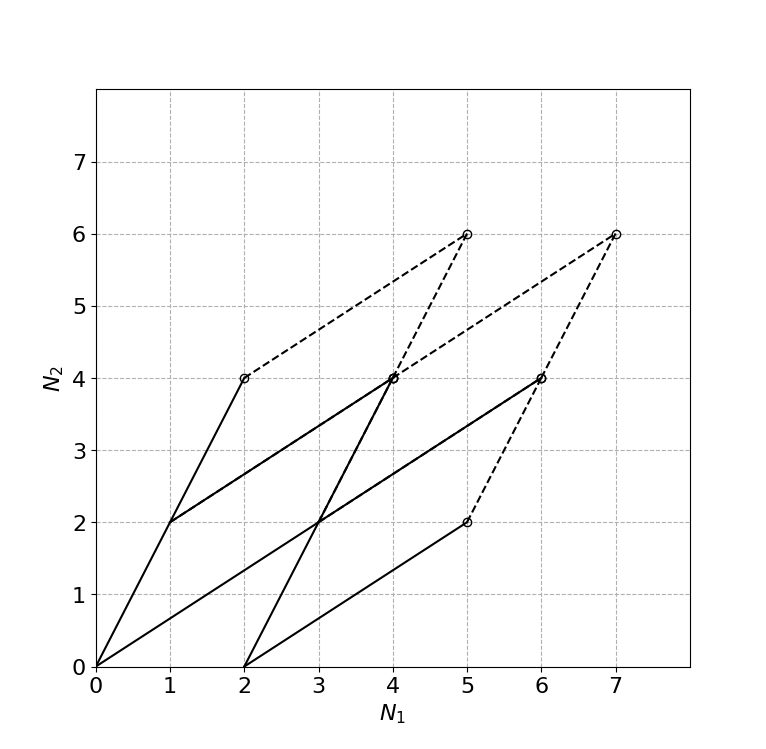} 
    \caption{The robustly determinable range in Example \ref{ex:shiftedFPD}}
    \label{fig:shiftedFPD}  
\end{figure}

Since both $[0,0]^{\top}$ and $[2,0]^{\top}$ lie in the robustly determinable range in Example \ref{ex:shiftedFPD} while $[1,0]^{\top}$ does not, this region cannot be represented as an FPD of any integer matrix.
\end{example}

This example shows that the robustly determinable range in a single-stage robust MD-CRT is not always an FPD of some matrix. Therefore, to construct a multi-stage framework in which each intermediate stage remains valid, we must select the matrix moduli and their lcrm carefully. The next lemma gives a sufficient condition that guarantees the region~\eqref{union} forms an FPD.

\begin{lemma}\label{lm:FPDregion}
    Let $\mathbf{M}_1,\cdots,\mathbf{M}_L$ be nonsingular integer matrices. If the Hermite normal form of the matrix $\mathbf{M}_{l_0}^{-1}\text{lcrm}(\mathbf{M}_1,\mathbf{M}_2,\cdots,\mathbf{M}_L)$ is a diagonal matrix, denoted by $\mathbf{D}$, then the robustly determinable range in (\ref{union}) can be expressed as the FPD of $\mathbf{M}_{l_0}\mathbf{D}$, when $\text{lcrm}(\mathbf{M}_1,\mathbf{M}_2,\cdots,\mathbf{M}_L)=\mathbf{M}_{l_0}\mathbf{D}$.
\end{lemma}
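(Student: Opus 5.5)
The argument is a direct set equality: first show that the FPD of a diagonal matrix is a box, then scale that box by $\mathbf{M}_{l_0}$ and read off the shifts. Write $\mathbf{D}=\mathrm{diag}(d_1,\dots,d_D)$ with $d_i>0$; positivity holds because $\mathbf{D}$ is an HNF of a nonsingular matrix. We choose the lcrm as $\mathbf{M}_{l_0}\mathbf{D}$. This is a legitimate lcrm, since $\mathbf{M}_{l_0}^{-1}\mathrm{lcrm}=\mathbf{D}\mathbf{U}$ for a unimodular $\mathbf{U}$, and right multiplication by a unimodular matrix preserves the lcrm property. Then $\mathbf{M}_{l_0}^{-1}\mathrm{lcrm}=\mathbf{D}$, and the index set in (\ref{union}) is
\[
\mathcal{N}(\mathbf{D})=\{\mathbf{k}\in\mathbb{Z}^D \mid 0\le k_i\le d_i-1\}.
\]
This holds because $\mathbf{D}\mathbf{x}$ with $\mathbf{x}\in[0,1)^D$ is integral exactly when each $d_ix_i$ is an integer in $[0,d_i)$.

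Next we show that the union (\ref{union}) equals $\mathcal{N}(\mathbf{M}_{l_0}\mathbf{D})$, proving the two inclusions.

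\emph{Inclusion of the union in the FPD.} Take $\mathbf{f}=\mathbf{r}+\mathbf{M}_{l_0}\mathbf{k}$ with $\mathbf{r}=\mathbf{M}_{l_0}\mathbf{y}$, $\mathbf{y}\in[0,1)^D$, and $\mathbf{k}\in\mathcal{N}(\mathbf{D})$. Then $\mathbf{f}=\mathbf{M}_{l_0}(\mathbf{y}+\mathbf{k})=\mathbf{M}_{l_0}\mathbf{D}\mathbf{x}$ with $x_i=(y_i+k_i)/d_i$. Since $0\le y_i+k_i<1+(d_i-1)=d_i$, we get $x_i\in[0,1)$. Because $\mathbf{f}$ is an integer vector, $\mathbf{f}\in\mathcal{N}(\mathbf{M}_{l_0}\mathbf{D})$.

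\emph{Inclusion of the FPD in the union.} Take an integer $\mathbf{f}=\mathbf{M}_{l_0}\mathbf{D}\mathbf{x}$ with $\mathbf{x}\in[0,1)^D$, and set $\mathbf{z}=\mathbf{D}\mathbf{x}=\mathbf{M}_{l_0}^{-1}\mathbf{f}$, so $z_i\in[0,d_i)$. Let $\mathbf{k}=\lfloor\mathbf{z}\rfloor$. Each $k_i$ is an integer in $\{0,\dots,d_i-1\}$, so $\mathbf{k}\in\mathcal{N}(\mathbf{D})$. Put $\mathbf{r}=\mathbf{f}-\mathbf{M}_{l_0}\mathbf{k}=\mathbf{M}_{l_0}(\mathbf{z}-\mathbf{k})$. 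This vector is integral and has coordinates $\mathbf{z}-\mathbf{k}\in[0,1)^D$, so $\mathbf{r}\in\mathcal{N}(\mathbf{M}_{l_0})$. This is exactly the decomposition (\ref{decomp}), with $\lfloor\mathbf{M}_{l_0}^{-1}\mathbf{f}\rfloor=\mathbf{k}$. Hence $\mathbf{f}$ lies in the union.

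As a consistency check, the cardinalities agree: $|\det(\mathbf{M}_{l_0})|\prod_i d_i=|\det(\mathbf{M}_{l_0}\mathbf{D})|$.

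The only delicate point is the interval arithmetic in the second inclusion. It works because diagonality makes the condition $\mathbf{k}\in\mathcal{N}(\mathbf{D})$ coordinatewise, so $\lfloor z_i\rfloor$ stays in range independently for each $i$. For a non-diagonal HNF the condition on $\mathbf{k}$ couples the coordinates, and the argument fails, as Example~\ref{ex:shiftedFPD} shows.
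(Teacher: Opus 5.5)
Your proof is correct and follows essentially the same route as the paper: use the unimodular factor to take $\mathbf{M}_{l_0}\mathbf{D}$ as the lcrm, then prove both inclusions by rescaling $\mathbf{y}+\mathbf{k}$ by $\mathbf{D}^{-1}$ in one direction and, in the other, splitting $\mathbf{D}\mathbf{x}$ into its floor (which lies in $\mathcal{N}(\mathbf{D})$) plus a fractional part. You are also slightly more careful than the paper in checking explicitly that $\mathcal{N}(\mathbf{D})$ is a box and that the residual $\mathbf{M}_{l_0}(\mathbf{z}-\mathbf{k})$ is an integer vector.
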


\begin{proof}
Since all lcrms of $\mathbf M_1,\ldots,\mathbf M_L$ share the same Hermite normal form, the Hermite normal form of $\mathbf M_{l_0}^{-1}\mathrm{lcrm}
(\mathbf M_1,\ldots,\mathbf M_L)$ is unique.
Let the Hermite normal form of the matrix $\mathbf{M}_{l_0}^{-1}\text{lcrm}(\mathbf{M}_1,\mathbf{M}_2,\cdots,\mathbf{M}_L)$ be $\mathbf{D}$, where $\mathbf{D}$ is a diagonal matrix. In other words, there exists a unimodular matrix $\mathbf{U}$ such that $\mathbf{M}_{l_0}^{-1}\text{lcrm}(\mathbf{M}_1,\mathbf{M}_2,\cdots,\mathbf{M}_L)\mathbf{U}=\mathbf{D}$. Then, $\mathbf{M}_{l_0}\mathbf{D}$ is also an lcrm of $\mathbf{M}_1,\cdots,\mathbf{M}_L$. When  $\mathbf{R}=\mathbf{M}_{l_0}\mathbf{D}$ is the chosen lcrm, then the corresponding robustly determinable range in \eqref{union} becomes
$$\mathcal{S}=\bigcup_{\mathbf{k} \in \mathcal{N}(\mathbf{D})} \big{(} \mathcal{N}(\mathbf{M}_{l_0}) + \mathbf{M}_{l_0}\mathbf{k} \big{)}.$$
We next prove that $\mathcal{S}=\mathcal{N}(\mathbf{R})$.

For any $\mathbf{s} \in \mathcal{S}$, there exist $\mathbf{x} \in [0,1)^{D}$  and $\mathbf{k} \in \mathcal{N}(\mathbf{D})$ such that $\mathbf{s}=\mathbf{M}_{l_0}(\mathbf{x}+\mathbf{k})$. Since $\mathbf{D}$ is a diagonal matrix that all diagonal elements are positive integers, WLOG, let $\mathbf{D}= \mbox{diag}(d_1,d_2,\cdots,d_D)$, where $d_1,d_2,\cdots,d_D$ are positive integers. Since $\mathbf{k} \in \mathcal{N}(\mathbf{D})$, $k_i \leq d_i-1$ for all $1 \leq i \leq D$. Therefore, $\mathbf{D}^{-1}(\mathbf{x}+\mathbf{k}) \in [0,1)^{D}$. Then, we have $\mathbf{s}=\mathbf{M}_{l_0}\mathbf{D}\big{(}\mathbf{D}^{-1}(\mathbf{x}+\mathbf{k})\big{)} \in \mathcal{N}(\mathbf{M}_{l_0}\mathbf{D})= \mathcal{N}(\mathbf{R})$. 

Conversely, for any $\mathbf{z} \in \mathcal{N}(\mathbf{R})$, there exists $\mathbf{y} \in [0,1)^{D}$ such that $\mathbf{z}=\mathbf{M}_{l_0}\mathbf{D}\mathbf{y}$. Write $\mathbf{D}\mathbf{y}= \lfloor \mathbf{D}\mathbf{y} \rfloor + \mathbf{r}$, with $\lfloor \mathbf{D}\mathbf{y} \rfloor \in \mathcal{N}(\mathbf{D})$ and $\mathbf{r} \in [0,1)^{D}$. Then we have $\mathbf{z}=\mathbf{M}_{l_0}\mathbf{r}+\mathbf{M}_{l_0} \lfloor \mathbf{D}\mathbf{y} \rfloor \in \mathcal{S}$.
\end{proof}

Lemma \ref{lm:FPDregion} provides a simple and effective condition for multi-stage reconstruction. It is easy to check in practice because the Hermite normal form of an integer matrix is unique and efficiently computable.
Moreover, when the condition is satisfied, the entire robustly determinable range coincides exactly with the FPD of a specific lcrm of the matrix moduli.
Thus, the robustly determinable range is the same with the uniquely determinable range of the MD-CRT.

Finally, note that this condition is only sufficient, not necessary.  
There exist cases where the robustly determinable range can still form an FPD even when the corresponding Hermite normal form is not diagonal.

\subsection{A Multi-Stage Robust Reconstruction Framework}
We now return to the motivating example introduced at the beginning of this section.

For the first group of matrix moduli, $\{\mathbf{M}_1,\mathbf{M}_2,\mathbf{M}_3\}$, if we select $l_0=1$, there exists an lcrm $\mathbf{R}_1$ of $\mathbf{M}_1,\mathbf{M}_2,\mathbf{M}_3$ such that 
\begin{equation*}
    \mathbf{R}_1=\mathbf{M}_1\begin{pmatrix}
        256 &0\\
        0 &256\\
    \end{pmatrix}.
\end{equation*}
According to Proposition \ref{robust} and Lemma \ref{lm:FPDregion}, if $\tau < \lambda_{\mathcal{L}(\mathbf{\Gamma}_1)}/4 = \sqrt{773}/4$, we can obtain an $\tilde{\mathbf{f}}_1$ with $
\|\mathbf{f}_1-\tilde{\mathbf{f}}_1\| \leq \tau$ by applying single-stage robust MD-CRT with matrix moduli $\mathbf{M}_1,\mathbf{M}_2,\mathbf{M}_3$ and their erroneous vector remainders $\tilde{\mathbf{r}}_1,\tilde{\mathbf{r}}_2,\tilde{\mathbf{r}}_3$, where $\mathbf{f}_1$ satisfies (\ref{r1}).

Similarly, for the second group of matrix moduli, $\{\mathbf{M}_4,\mathbf{M}_5,\mathbf{M}_6\}$, if we select $l_0=4$, there exists an lcrm $\mathbf{R}_2$ of $\mathbf{M}_4,\mathbf{M}_5,\mathbf{M}_6$ such that
\begin{equation*}
    \mathbf{R}_2=\mathbf{M}_4\begin{pmatrix}
        576 &0\\
        0 &576\\
    \end{pmatrix}.
\end{equation*}
According to Proposition \ref{robust} and Lemma \ref{lm:FPDregion}, if $\tau < \lambda_{\mathcal{L}(\mathbf{\Gamma}_2)}/4 = \sqrt{773}/4$, we can obtain an $\tilde{\mathbf{f}}_2 \in \mathcal{N}(\mathbf{R}_2)$ with $
\|\mathbf{f}_2-\tilde{\mathbf{f}}_2\| \leq \tau$ by applying single-stage robust MD-CRT with matrix moduli $\mathbf{M}_4,\mathbf{M}_5,\mathbf{M}_6$ and their erroneous vector remainders $\tilde{\mathbf{r}}_4,\tilde{\mathbf{r}}_5,\tilde{\mathbf{r}}_6$, where $\mathbf{f}_2$ satisfies (\ref{r2}).

Now we combine the two groups. Treat $\mathbf R_1,\mathbf R_2$ as new moduli and $\tilde{\mathbf f}_1,\tilde{\mathbf f}_2$ as their erroneous remainders.
Since $\mathbf M_1$ and $\mathbf M_4$ are co-prime, the gcld of $\mathbf{R}_1$ and $\mathbf{R}_2$ is $64\mathbf{I}$. Therefore, for any integer vector $\mathbf{f}$ with 
\begin{equation*}
\left\lfloor \mathbf{R}_{1}^{-1} \mathbf{f} \right\rfloor \in \mathcal{N} \left( \mathbf{R}_{1}^{-1} \text{lcrm}(\mathbf{R}_1, \mathbf{R}_2) \right),
\end{equation*}
if $\tau <\lambda_{\mathcal{L}(64\mathbf{I})}/4 = 16$, then a single-stage robust MD-CRT on $\{\mathbf R_1,\mathbf R_2\}$ yields an estimate
$\tilde{\mathbf f}$ with $\|\mathbf f-\tilde{\mathbf f}\|\le\tau$. 
Note that when $\mbox{lcrm}(\mathbf{R}_1,\mathbf{R}_2)
=\mbox{diag}(1780992,1780992)$, 
the robustly determinable range here is 
$\mathcal{N}(\mbox{lcrm}(\mathbf{R}_1,\mathbf{R}_2))$.

Combining these two stages, for any integer vector $\mathbf{f}$ with \begin{equation*}
\left\lfloor \mathbf{R}_{1}^{-1} \mathbf{f} \right\rfloor \in \mathcal{N} \left( \mathbf{R}_{1}^{-1} \text{lcrm}(\mathbf{R}_1, \mathbf{R}_2) \right),
\end{equation*}
if $\tau <\min\{\sqrt{773}/4,16\}=\sqrt{773}/4$, we can robustly estimate $\tilde{\mathbf{f}}$ with $
\|\mathbf{f}-\tilde{\mathbf{f}}\| \leq \tau$. In this example, if the robust MD-CRT is applied directly to all matrix moduli as a whole, no robustness can be achieved. However, by dividing the matrix moduli into two groups and then combining the results from each group to apply the robust MD-CRT again, robustness becomes attainable. This demonstrates that while the single-stage robust MD-CRT completely fails for this set of moduli, the proposed two-stage strategy enables non-trivial robustness.

The above example shows the key idea of multi-stage robust MD-CRT, we next formally present the framework. We first consider two-stage robust MD-CRT.

Let $\mathcal{M}$ be the full set of matrix moduli.  
For $1\le i\le K$, let
\[
\mathcal{M}_i=\{\mathbf{M}_{i,1},\mathbf{M}_{i,2},\ldots,\mathbf{M}_{i,L_i}\}
\]
be $K$ (not necessarily disjoint) subsets of $\mathcal{M}$ such that
$\bigcup_{i=1}^K \mathcal{M}_i \;=\; \mathcal{M}$, and $\mathbf{R}_i$ be an lcrm of the matrices in the $i$-th group $\mathcal{M}_i$. Each group $\mathcal M_i$ contains one or more matrices.  
If $L_i=1$, we directly set $\mathbf R_i=\mathbf M_{i,1}$.

For each group $\mathcal M_i$ with $L_i>1$, assume that there exists a matrix 
$\mathbf{M}_{i,1}\in \mathcal M_i$ such that the Hermite normal form of 
\[
\mathbf{M}_{i,1}^{-1}\,\mathrm{lcrm}(\mathbf{M}_{i,1},\mathbf{M}_{i,2},\ldots,\mathbf{M}_{i,L_i})
\]
is a diagonal matrix, denoted by $\mathbf{D}_i$.  
We then define $\mathbf{R}_i=\mathbf{M}_{i,1}\mathbf{D}_i$ as an lcrm of $\mathbf{M}_{i,1},\mathbf{M}_{i,2},\ldots,\mathbf{M}_{i,L_i}$, where $\mathbf{D}_i=\mathbf{I}$ if $L_i=1$, for $1\leq i\leq K$.  
Let $\mathbf {f}$ be an integer vector, and $\mathbf {r}_{i,j}$ denote the vector remainder of $\mathbf f$ modulo $\mathbf {M}_{i,j}$. Let $\mathbf{f}_i$ be the vector remainder of $\mathbf{f}$ modulo $\mathbf{R}_i$ for $1\leq i \leq K$.

Our objective is to reconstruct $\mathbf f$ through the following congruence system:
\begin{equation}\label{eq:systemall}
\mathbf{f}=\mathbf{M}_{i,j}\mathbf{n}_{i,j}+\mathbf{r}_{i,j}
\end{equation}
where $\mathbf{r}_{i,j}\in \mathcal{N}(\mathbf{M}_{i,j})$ and $\mathbf{n}_{i,j}$ is an unknown integer vector for $1 \leq i \leq K$ and $1 \leq j \leq L_i$. In the error-free case, $\mathbf{f}$ can be uniquely reconstructed from $\mathbf{M}_{i,j}$ and error-free vector remainders $\mathbf{r}_{i,j}$ for $1 \leq i \leq K$ and $1 \leq j \leq L_i$, if $\mathbf{f} \in \mathcal{N}(\text{lcrm}(\mathbf{R}_1,\mathbf{R}_2,\cdots,\mathbf{R}_K))$.

The system \eqref{eq:systemall} can be partitioned into $K$ subsystems:
\begin{equation}\label{group}
\mathbf{f}_i=\mathbf{M}_{i,j}\mathbf{n}^{\prime}_{i,j}+\mathbf{r}_{i,j}, \quad 1 \leq j \leq L_i,
\end{equation}
for $1 \leq i \leq K$.
Applying the MD-CRT to each subsystem \eqref{group} yields 
$\mathbf f_i\in \mathcal N(\mathbf R_i)$, which implies that $\mathbf{f}=\mathbf{R}_i\mathbf{n}_i+\mathbf{f}_i$, for some integer vector $\mathbf n_i$. 

These intermediate results $\{\mathbf f_i\}_{i=1}^K$ can then be combined to form a new system of congruences: 
\begin{equation}\label{secondstage}
   \mathbf{f} = \mathbf{R}_i\mathbf{n}_i+\mathbf{f}_i, \quad 1 \leq i \leq K,
\end{equation}
whose solution lies in $\mathcal N(\mathrm{lcrm}(\mathbf R_1,\mathbf R_2,\ldots,\mathbf R_K))$. If there exists a unimodular matrix $\mathbf U$ such that $\mathbf R_i=\mathbf R_j\mathbf U$ for some $i\neq j$, then one of the equations in \eqref{secondstage} is redundant and can be omitted.   
In the following, we assume that no such unimodular equivalence holds 
between any distinct pair $\mathbf R_i$ and $\mathbf R_j$.

We now turn to the case with erroneous vector remainders.  
Let $\tilde{\mathbf{r}}_{i,j}$ denote the erroneous vector remainder of $\mathbf f$ modulo $\mathbf M_{i,j}$ for $1 \leq i \leq K$ and $1 \leq j \leq L_i$.  
Based on these inputs, we present a robust reconstruction procedure for $\mathbf f$, referred to as the \emph{two-stage robust MD-CRT}.

\begin{description}[leftmargin=1.3cm]
  \item[Step 1:] For each subsystem \eqref{group} corresponding to $\mathcal M_i$:
  \begin{itemize}
     \item If $L_i>1$, fix $l_0=1$ and apply Steps 2--5 of the 
     single-stage robust MD-CRT in Section \ref{s4} to obtain a 
     robust estimate $\tilde{\mathbf f}_i$ of $\mathbf f_i$.  
     \item If $L_i=1$, simply set $\mathbf R_i=\mathbf M_{i,1}$ and 
     $\tilde{\mathbf f}_i=\tilde{\mathbf r}_{i,1}$.  
  \end{itemize}
  
  \item[Step 2:] Regard the estimates $\tilde{\mathbf f}_i$ as the 
  erroneous remainders in the system \eqref{secondstage}, and apply 
  Steps~1--5 of the single-stage robust MD-CRT in Section \ref{s4} to obtain the final 
  estimate $\tilde{\mathbf f}$ of $\mathbf f$.
\end{description}

The correctness and robustness of this two-stage procedure are established in the following theorem.

For each $1 \leq i \leq K$, let $\tau_{i}$ be the vector remainder error bound in the $i$-th group, i.e., $$\| \Delta \mathbf{r}_{i,j}\|= \| \tilde{\mathbf{r}}_{i,j} - \mathbf{r}_{i,j} \| \leq \tau_{i},$$ for $1 \leq j \leq L_i$. 

For each group $\mathcal{M}_i$ with $L_i>1$, let $\mathbf{G}^{(i)}_{j_1,j_2}$ be a gcld of $\mathbf{M}_{i,j_1}$ and $\mathbf{M}_{i,j_2}$ for $1\leq j_1 \neq j_2 \leq L_i$ and define 
\begin{equation*}
    \delta_i = \min_{\substack{2 \leq j \leq L_i}} 
\frac{\lambda_{\mathcal{L}(\mathbf{G}^{(i)}_{1,j})}}{4}.
\end{equation*}
For each group $\mathcal{M}_i$, the first matrix $\mathbf{M}_{i,1}$ 
is assumed to be the one satisfying Lemma~\ref{lm:FPDregion},
which plays the role of $\mathbf{M}_{l_0}$ in Proposition~\ref{robust}. 
Therefore, in computing vector remainder error bound, we fix $l_0=1$ for every group.
If $L_i=1$, we conventionally set $\delta_i=+\infty$ since a single modulus imposes no restriction at this stage.

Let $\mathbf{R}$ be an lcrm of $\mathbf R_1,\mathbf R_2,\ldots,\mathbf R_K$ and let $\mathbf{P}_{i,j}$ be a gcld of $\mathbf{R}_i$ and $\mathbf{R}_j$ for $1 \leq i \neq j \leq K$. Choose an index $l_0$ such that
\begin{equation}\notag
\min_{\substack{1 \leq j \leq K\\ j \neq l_0}} \lambda_{\mathcal{L}(\mathbf{P}_{l_0,j})}
= \max_{1 \leq i \leq K} \min_{\substack{1 \leq j \leq K\\ j \neq i}} \lambda_{\mathcal{L}(\mathbf{P}_{i,j})}.
\end{equation}
and define 
\begin{equation*}
    \delta = \min_{\substack{1 \leq j \leq K\\ j \neq l_0}} 
\frac{\lambda_{\mathcal{L}(\mathbf{P}_{l_0,j})}}{4}.
\end{equation*}

\begin{thm}\label{th:multi}
    Let $\mathbf{f}$ be an integer vector such that
    \begin{equation*}
        \mathbf{f} \in \bigcup_{\mathbf{k} \in \mathcal{N}(\mathbf{R}_{l_0}^{-1}\mathbf{R})} \big{(} \mathcal{N}(\mathbf{R}_{l_0}) + \mathbf{R}_{l_0}\mathbf{k} \big{)}.
    \end{equation*}
    If the vector remainder errors satisfy
    \begin{equation}\label{eq:th2-stage}
        \| \Delta \mathbf{r}_{i,j}\| \leq \tau_i < \min\{\delta_i,\delta\},
    \end{equation}
    for all $1 \leq i \leq K$ and $1 \leq j \leq L_i$,
    then $\mathbf f$ can be robustly reconstructed. In particular, the estimate $\tilde{\mathbf f}$ obtained by the two-stage robust MD-CRT satisfies
    \begin{equation}
        \| \tilde{\mathbf{f}}-\mathbf{f}\| \leq \frac{1}{K}\sum_{i=1}^{K}\tau_i.
    \end{equation}
\end{thm}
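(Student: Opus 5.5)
The plan is to chain Proposition~\ref{robust} twice: once inside each group (first stage) and once on the new system \eqref{secondstage} (second stage). Lemma~\ref{lm:FPDregion} is what makes the first-stage outputs valid remainders for the second stage.

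\emph{First stage.} Fix $i$ with $L_i>1$. By the choice of $\mathbf R_i=\mathbf M_{i,1}\mathbf D_i$ and Lemma~\ref{lm:FPDregion}, the robustly determinable range \eqref{union} of the subsystem \eqref{group}, with $l_0=1$, is exactly $\mathcal N(\mathbf R_i)$. The true remainder $\mathbf f_i$ of $\mathbf f$ modulo $\mathbf R_i$ lies in this set. It also satisfies $\mathbf f_i\equiv \mathbf r_{i,j}\bmod \mathbf M_{i,j}$ for every $j$, because $\mathbf M_{i,j}$ left-divides $\mathbf R_i$. Hence the data $\tilde{\mathbf r}_{i,j}$ are legitimate erroneous remainders of $\mathbf f_i$.

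Since $\|\Delta\mathbf r_{i,j}\|\le\tau_i<\delta_i$, the triangle inequality gives $\|\Delta\mathbf r_{i,j}-\Delta\mathbf r_{i,1}\|<\lambda_{\mathcal L(\mathbf G^{(i)}_{1,j})}/2$. So $\mathbf 0$ is the unique closest lattice point in \eqref{iff}. This is the sufficient condition \eqref{tau} with $l_0=1$; equation \eqref{l0} is not needed, since only the bound for the fixed index is used. Steps 2--5 therefore recover every $\mathbf M_{i,j}\mathbf n'_{i,j}$ exactly, and the averaged estimate satisfies
\[
\tilde{\mathbf f}_i-\mathbf f_i=\frac{1}{L_i}\sum_j\Delta\mathbf r_{i,j},\qquad \|\tilde{\mathbf f}_i-\mathbf f_i\|\le\tau_i .
\]
For $L_i=1$ we have $\mathbf R_i=\mathbf M_{i,1}$ and $\tilde{\mathbf f}_i=\tilde{\mathbf r}_{i,1}$, so the same bound holds trivially. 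Write $\Delta\mathbf f_i=\tilde{\mathbf f}_i-\mathbf f_i$. One point needs care: $\tilde{\mathbf f}_i$ may be non-integer after averaging. Proposition~\ref{robust} should still apply to real-valued errors, because the argument uses only norms and closest lattice points. If integrality is required, I would instead feed $\mathbf M_{i,1}\mathbf n'_{i,1}+\tilde{\mathbf r}_{i,1}$.

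\emph{Second stage.} The congruences $\mathbf f=\mathbf R_i\mathbf n_i+\mathbf f_i$ form a system of type \eqref{cong} with moduli $\mathbf R_1,\dots,\mathbf R_K$. These moduli are distinct up to unimodular equivalence by assumption. The remainder errors are $\Delta\mathbf f_i$, with $\|\Delta\mathbf f_i\|\le\tau_i<\delta$. The hypothesis on $\mathbf f$ is exactly the range \eqref{dynamicrange}/\eqref{union} for the index $l_0$ chosen via the $\mathbf P_{i,j}$. Then $\|\Delta\mathbf f_j-\Delta\mathbf f_{l_0}\|<2\delta\le\lambda_{\mathcal L(\mathbf P_{l_0,j})}/2$, so \eqref{iff} holds and all $\mathbf R_i\mathbf n_i$ are recovered exactly. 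The final average gives
\[
\tilde{\mathbf f}-\mathbf f=\frac1K\sum_i\Delta\mathbf f_i,
\]
and the triangle inequality yields $\|\tilde{\mathbf f}-\mathbf f\|\le\frac1K\sum_i\tau_i$.

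\emph{Main obstacle.} The delicate step is justifying that the first-stage output estimates the specific residue $\mathbf f_i\in\mathcal N(\mathbf R_i)$ rather than some other representative of $\mathbf f$. Proposition~\ref{robust} guarantees exact recovery only for inputs inside its robustly determinable range. Here that range must coincide with $\mathcal N(\mathbf R_i)$, which Lemma~\ref{lm:FPDregion} supplies precisely because the lcrm is chosen as $\mathbf M_{i,1}\mathbf D_i$. I would spell out this identification carefully, including that the MD-CRT in Step 3 is solved in $\mathcal N(\mathbf R_i)$ for this particular lcrm. Only after that would I invoke Proposition~\ref{robust}.
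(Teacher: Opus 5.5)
Your proposal is correct and follows the paper's proof. Lemma~\ref{lm:FPDregion} identifies each group's robustly determinable range with $\mathcal N(\mathbf R_i)$, Proposition~\ref{robust} with $l_0=1$ then recovers each $\mathbf f_i$ to within $\tau_i$, and a second application of Proposition~\ref{robust} to \eqref{secondstage} followed by averaging gives $\frac1K\sum_i\tau_i$; your extra remarks (the explicit $\lambda/2$ closest-point argument, that \eqref{l0} is not needed for the fixed index, and that real-valued intermediate estimates are harmless) fill in points the paper passes over silently rather than change the argument.
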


\begin{proof}
    For each $i$ with $L_i>1$, Lemma~\ref{lm:FPDregion} ensures that, with $\mathbf M_{i,1}$ chosen as $\mathbf M_{l_0}$ in Proposition~\ref{robust}, the robustly determinable range equals $\mathcal N(\mathbf R_i)$.

    Since $\mathbf{f}_i \in \mathcal{N}(\mathbf{R}_i)$ and $\| \Delta \mathbf{r}_{i,j}\| \leq \tau_i < \delta_i$, Proposition \ref{robust} ensures that $\mathbf{M}_{i,j}\mathbf{n}^{\prime}_{i,j}$ in \eqref{group} can be correctly determined. Thus, $\mathbf{f}_i$ can be robustly reconstructed as 
    \begin{equation*}
        \tilde{\mathbf{f}}_i= \frac{1}{L_i}\sum_{j=1}^{L_i}(\mathbf{M}_{i,j}\mathbf{n}^{\prime}_{i,j}+\tilde{\mathbf{r}}_{i,j}).
    \end{equation*}
    Moreover, we have 
    \begin{equation*}
       \| \tilde{\mathbf{f}}_i-\mathbf{f}_i\| \leq \| \frac{1}{L_i}\sum_{j=1}^{L_i}\Delta \mathbf{r}_{i,j}  \| \leq \tau_i.
    \end{equation*}

    If $L_i=1$, then $\mathbf R_i=\mathbf M_{i,1}$ and 
    $\tilde{\mathbf f}_i=\tilde{\mathbf r}_{i,1}$. In this case, the first stage reduces to a trivial identity and no further operation is needed.

    In the second stage, the estimates $\tilde{\mathbf{f}}_i$ serve as erroneous vector remainders in the system \eqref{secondstage}. Since 
    $$\mathbf{f} \in \bigcup_{\mathbf{k} \in \mathcal{N}(\mathbf{R}_{l_0}^{-1}\mathbf{R})} \big{(} \mathcal{N}(\mathbf{R}_{l_0}) + \mathbf{R}_{l_0}\mathbf{k} \big{)}$$
    and $\| \tilde{\mathbf{f}}_i-\mathbf{f}_i\| \leq \tau_i < \delta$, Proposition \ref{robust} guarantees that $\mathbf R_i\mathbf n_i$ can be accurately obtained. Therefore, $\mathbf f$ can be robustly reconstructed as
    \begin{equation*}
        \tilde{\mathbf{f}}= \frac{1}{K}\sum_{i=1}^{K}(\mathbf{R}_{i}\mathbf{n}_{i}+\tilde{\mathbf{f}}_{i}).
    \end{equation*}
    Finally, we obtain the error bound
     \begin{equation*}
        \| \tilde{\mathbf{f}}-\mathbf{f}\| \leq \frac{1}{K}\sum_{i=1}^{K}\tau_i.
    \end{equation*}
\end{proof}

The motivating example presented earlier has shown a case where the single-stage method cannot provide any robustness, while the two-stage framework successfully enables robust reconstruction.
We next present another example to show that even when the single-stage method already provides a certain level of robustness, the proposed multi-stage framework can further improve the robustness level.

\begin{example}\label{ex:2-stage}
    Let 
    \begin{equation}\notag
   \mathbf{C}_1 =\begin{pmatrix}
       7 &0\\
       1 &7\\
   \end{pmatrix}, 
   \mathbf{C}_2=\begin{pmatrix}
       7 &1\\
       0 &7\\
   \end{pmatrix}, 
   \mathbf{C}_3=\begin{pmatrix}
       11 &1\\
       0 &11\\
   \end{pmatrix}.
\end{equation}
These three matrices are pairwise co-prime. For any $2\times 2$ matrix $\mathbf{M}$, consider the following four matrix moduli:
$$30\mathbf{M},10\mathbf{M}\mathbf{C}_1,15\mathbf{M}\mathbf{C}_2,42\mathbf{M}\mathbf{C}_3.$$

For the single-stage robust MD-CRT, choosing $30\mathbf M$
as $\mathbf M_{l_0}$ achieves the maximum vector remainder error bound
$3\lambda_{\mathcal L(\mathbf M)}/2$.

For the two-stage robust MD-CRT, since $30\mathbf{M}$, $10\mathbf{M}\mathbf{C}_1,15\mathbf{M}\mathbf{C}_2$ satisfy Lemma \ref{lm:FPDregion}, we divide the matrix moduli into the following two groups:
$\{30\mathbf{M},10\mathbf{M}\mathbf{C}_1,15\mathbf{M}\mathbf{C}_2\}$ and $\{42\mathbf{M}\mathbf{C}_3\}$. For the first group, when $30\mathbf{M}$ is chosen as $\mathbf{M}_{l_0}$, the lcrm $\mathbf{R}_1$ is $1470\mathbf{M}$ and the robustness level $\delta_1=5\lambda_{\mathcal{L}_{(\mathbf{M}})}/2$. Since the second group contains only one matrix, it is directly carried into the second stage. So, the moduli in the second stage are $1470\mathbf{M}$ and $42\mathbf{M}\mathbf{C}_3$. The robustness level of the second stage is $\delta=21\lambda_{\mathcal{L}_{(\mathbf{M}})}/2$. As a result, the vector remainder error bound of the two-stage approach is $\min{\{\delta_1,\delta\}}=5\lambda_{\mathcal{L}_{(\mathbf{M}})}/2$, which is greater than that of the single-stage robust MD-CRT shown above.
\end{example}

In Theorem \ref{th:multi}, the vector remainder errors within the $i$-th group must satisfy $\|\Delta\mathbf r_{i,j}\|\le\tau_i<\delta_i$, so the vector remainder error bound is governed by $\delta_i$, which depends on the gclds $\{\mathbf G^{(i)}_{1,j}\}_{j=2}^{L_i}$. Hence, different partitions of the whole moduli set generally lead to different $\{\delta_i\}_{i=1}^K$. In the second stage, the vector remainder error bound is further constrained by $\delta$, which depends on the cross-group gclds $\{\mathbf P_{i,j}\}_{i\ne j}$.
Consequently, the overall robustness level of the two-stage procedure is $\tau_{\mathrm{overall}}=\min\{\delta_i,\delta\}$.
This shows that an appropriate grouping can increase the vector remainder error bound by enlarging some $\delta_i$ (and possibly $\delta$).

Unlike the one-dimensional case where groups can be formed arbitrarily,
in the multidimensional setting each group needs to satisfy the sufficient
condition in Lemma~\ref{lm:FPDregion} so that its robustly determinable
range is an FPD and the vector remainders in Stage-2 are well-defined. Therefore,
for a fixed set of matrix moduli, finding a grouping that maximizes
$\tau_{\mathrm{overall}}$ is nontrivial and is an interesting
direction for future work.

On the other hand, there are situations where multi-stage brings no
gain over the single-stage method. We next present such a case.

\begin{cor}
    For a given set of matrix moduli $\{\mathbf{M}_1,\mathbf{M}_2,\cdots,\mathbf{M}_L\}$, suppose $\mathbf{M}_i=\mathbf{M}\mathbf{\Gamma}_i$ for all $1 \leq i \leq L$, and $\mathbf{\Gamma}_i$ and $\mathbf{\Gamma}_j$ are co-prime for any $1 \leq i \neq j \leq L$. Then the vector remainder error bound cannot be improved by using a two-stage robust MD-CRT.
\end{cor}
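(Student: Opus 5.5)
Our plan is to compare the single-stage bound with the two-stage bound directly, showing that every two-stage bound is at most $\lambda_{\mathcal L(\mathbf M)}/4$ while the single-stage bound already equals it.

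\textbf{Single-stage bound.} Since $\mathbf\Gamma_i$ and $\mathbf\Gamma_j$ are left co-prime, $\mathbf M$ is a gcld of $\mathbf M_i=\mathbf M\mathbf\Gamma_i$ and $\mathbf M_j=\mathbf M\mathbf\Gamma_j$. Any cld $\mathbf G$ of $\mathbf M\mathbf\Gamma_i$ and $\mathbf M\mathbf\Gamma_j$ is a left divisor of $\mathbf M$, by writing $\mathbf I=\mathbf\Gamma_i\mathbf X+\mathbf\Gamma_j\mathbf Y$ (Bezout from co-primality) and multiplying on the left by $\mathbf M$. Hence $\mathcal L(\mathbf G_{i,j})=\mathcal L(\mathbf M)$ for all $i\neq j$. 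By \eqref{tau}, the single-stage bound is therefore exactly $\lambda_{\mathcal L(\mathbf M)}/4$, independent of $l_0$.

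\textbf{Two-stage bound.} Take any partition into groups $\mathcal M_1,\dots,\mathcal M_K$ admissible for Theorem~\ref{th:multi}, and look at the two stages in turn.

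\emph{First stage.} For a group with $L_i>1$, the gclds inside the group are again (unimodular multiples of) $\mathbf M$, so $\delta_i=\lambda_{\mathcal L(\mathbf M)}/4$. Thus, if any group has at least two members, $\min\{\delta_i,\delta\}\le\lambda_{\mathcal L(\mathbf M)}/4$ and there is no improvement.

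\emph{Second stage, all singletons.} The only remaining case is that every group is a singleton, so $\mathbf R_i=\mathbf M_i$. The second stage is then literally the single-stage robust MD-CRT on the original moduli, and $\delta=\lambda_{\mathcal L(\mathbf M)}/4$.

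\emph{Second stage, general $\mathbf R_i$.} For completeness we also bound $\delta$ in general. Each $\mathbf R_i$ is an lcrm of $\{\mathbf M\mathbf\Gamma_j\}_{j\in\mathcal M_i}$, hence a left multiple structure $\mathbf R_i=\mathbf M\mathbf Q_i$ for some integer matrix $\mathbf Q_i$, since $\mathbf M$ left-divides each $\mathbf M\mathbf\Gamma_j$ and so left-divides their right multiple. Therefore $\mathbf M$ is a cld of every pair $\mathbf R_i,\mathbf R_j$. The gcld $\mathbf P_{i,j}$ then satisfies $\mathbf P_{i,j}=\mathbf M\mathbf T$, so $\mathcal L(\mathbf P_{i,j})\subseteq\mathcal L(\mathbf M)$ and $\lambda_{\mathcal L(\mathbf P_{i,j})}\ge\lambda_{\mathcal L(\mathbf M)}$. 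This is the wrong direction, and it is why the argument above relies on the first-stage bound $\delta_i$ instead.

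\textbf{Main obstacle.} The delicate point is the degenerate partition in which the grouping makes every non-singleton group's bound large. Here we note that every non-singleton group contains two moduli whose gcld is $\mathbf M$, so its $\delta_i$ is exactly $\lambda_{\mathcal L(\mathbf M)}/4$. We must also handle partitions with non-disjoint groups, but the same within-group argument applies. Finally, the case $K=1$ (one group containing everything) reduces to Step 1, which is the single-stage method itself.

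Combining the cases, $\min_i\min\{\delta_i,\delta\}\le\lambda_{\mathcal L(\mathbf M)}/4$ for every admissible grouping. This equals the single-stage bound, which proves the corollary.
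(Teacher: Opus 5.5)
Your proof is correct and follows the same route as the paper: the single-stage bound is $\lambda_{\mathcal L(\mathbf M)}/4$ because every pairwise gcld is $\mathbf M$ up to a unimodular factor, and any group with at least two moduli already has $\delta_i=\lambda_{\mathcal L(\mathbf M)}/4$, so $\min\{\delta_i,\delta\}$ cannot exceed it. You are slightly more careful than the paper in two places: you justify the gcld claim via the Bezout identity, and you treat the all-singleton partition (where $\delta_i=+\infty$), which the paper's blanket assertion $\delta_i=\lambda_{\mathcal L(\mathbf M)}/4$ overlooks and which you close by noting that the second stage then coincides with the single-stage method.
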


\begin{proof}
Applying the single-stage robust MD-CRT directly to the whole set of matrix moduli yields the vector remainder error bound of $\lambda_{\mathcal{L}(\mathbf{M})}/4$.

Now suppose the moduli are divided into $K$ groups. For each group, since the $\Gamma_i$'s are pairwise co-prime, the vector remainder error bound of the single-stage robust MD-CRT within that group is also $\delta_i=\lambda_{\mathcal{L}(\mathbf{M})}/4$.
According to Theorem~\ref{th:multi}, the overall vector remainder error bound of the two-stage robust MD-CRT is $\min\{\delta_i,\delta\}$, which cannot exceed $\lambda_{\mathcal{L}(\mathbf{M})}/4$. Thus, no robustness gain can be achieved in this case.
\end{proof}

The two-stage procedure can be naturally extended to multiple stages.
Let $\mathcal{M}$ be the full set of matrix moduli, which is also the matrix moduli in the first stage. 
Let the full set of moduli be processed in $S$ stages.
At stage $s$, $1\le s\le S-1$, we form $K_s$ groups
\[
\mathcal{M}^{(s)}_i=\{\mathbf{M}^{(s)}_{i,1},\mathbf{M}^{(s)}_{i,2},\ldots,
\mathbf{M}^{(s)}_{i,L_i^{(s)}}\}, \quad 1\le i\le K_s,
\]
whose union is the full matrix moduli of stage $s$. 
If $L_i^{(s)}=1$, then we set $\mathbf R^{(s)}_i=\mathbf M^{(s)}_{i,1}$;  
otherwise, we assume there exists a matrix $\mathbf M^{(s)}_{i,1}\in 
\mathcal M^{(s)}_i$ such that the Hermite normal form of
\[
(\mathbf M^{(s)}_{i,1})^{-1}\,
\mathrm{lcrm}(\mathbf M^{(s)}_{i,1},\ldots,\mathbf M^{(s)}_{i,L_i^{(s)}})
\]
is a diagonal matrix $\mathbf D^{(s)}_i$.  
We then define $\mathbf R^{(s)}_i=\mathbf M^{(s)}_{i,1}\mathbf D^{(s)}_i$, 
which is an lcrm of all matrices in $\mathcal M^{(s)}_i$.  
Thus the output matrix moduli of stage $s$ is the collection 
$\{\mathbf R^{(s)}_i\}_{i=1}^{K_s}$, which becomes the input set of matrix moduli 
at stage $s+1$. Then, the matrix moduli of the final stage $S$ is $\mathbf R^{(S-1)}_1,\ldots,\mathbf R^{(S-1)}_{K_{S-1}}$.

For each group $\mathcal M^{(s)}_i$ with $L_i^{(s)}>1$, define
\begin{equation}\label{delta_i^s}
   \delta^{(s)}_i=\min_{2\le j\le L_i^{(s)}}
\frac{\lambda_{\mathcal L(\,\mathrm{gcld}(
\mathbf M^{(s)}_{i,1},\mathbf M^{(s)}_{i,j}))}}{4}. 
\end{equation}
If $L_i^{(s)}=1$, set $\delta^{(s)}_i:=+\infty$.
For the final stage $S$, let $\mathbf R^{(S)}$ denote an lcrm of 
$\mathbf R^{(S-1)}_1,\ldots,\mathbf R^{(S-1)}_{K_{S-1}}$, and define
\begin{equation}\label{delta}
   \delta=\min_{\substack{1 \leq j \leq K_{S-1} \\ j\ne l_0}}
\frac{\lambda_{\mathcal L(\,\mathrm{gcld}(
\mathbf R^{(S-1)}_{l_0},\mathbf R^{(S-1)}_j))}}{4}, 
\end{equation}
where $l_0$ is chosen as \eqref{l0} in Proposition~\ref{robust} for the set 
$\{\mathbf R^{(S-1)}_i\}_{i=1}^{K_{S-1}}$. 

We then define a path sequence for each initial group. 
For each group $\mathcal{M}_{i}^{(1)}$, $1 \leq i \leq K_1$, let
\[
\phi^{(1)}(i):=\{i\}, \qquad 
\phi^{(s)}(i)\subseteq \{1,\ldots,K_s\}, \quad 2 \leq s \leq S-1,
\]
where $\phi^{(s)}(i)$ denotes the index (or set of indices) of the group(s) 
at stage $s$ that receives the output of group $i$ at stage $1$.

The main result is as follows.  

\begin{thm}\label{thm:general-multistage}
Let $\mathbf f$ be an integer vector such that
\[
\mathbf f \in 
\bigcup_{\mathbf k\in\mathcal N((\mathbf R^{(S-1)}_{l_0})^{-1}\mathbf R^{(S)})}
\big(\mathcal N(\mathbf R^{(S-1)}_{l_0})+\mathbf R^{(S-1)}_{l_0}\mathbf k\big).
\]
If the vector remainder errors satisfy
\begin{equation}\label{eq:thk-stage}
   \|\Delta \mathbf r_{i,j}\|\ \le\ \tau_i\ 
<\ \min\Big\{ \delta^{(1)}_i,\ 
\min_{\substack{2 \leq s \leq S-1 \\ k \in \phi^{(s)}(i)}} \delta^{(s)}_k,\ 
\delta \Big\}, 
\end{equation}
for $
1\le i\le K_1,\;1\le j\le L_i^{(1)}$, then $\mathbf f$ can be robustly reconstructed.  
Moreover, the final estimate $\tilde{\mathbf f}$ satisfies
\[
\|\tilde{\mathbf f}-\mathbf f\|\ 
\le \frac{1}{K_1}\sum_{i=1}^{K_1}\tau_i.
\]
\end{thm}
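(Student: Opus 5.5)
We argue by induction on the stage index $s$, generalizing the proof of Theorem~\ref{th:multi}. The inductive claim is this. For every stage $s$ with $1\le s\le S-1$ and every group index $k\in\{1,\ldots,K_s\}$, the stage-$s$ output $\tilde{\mathbf f}^{(s)}_k$ is an erroneous version of the true remainder $\mathbf f^{(s)}_k\in\mathcal N(\mathbf R^{(s)}_k)$ of $\mathbf f$ modulo $\mathbf R^{(s)}_k$. Moreover, its error is bounded by the largest $\tau_i$ over those initial groups $i$ whose outputs feed, directly or indirectly, into group $k$ at stage $s$ (that is, $k\in\phi^{(s)}(i)$). More precisely, the error is at most an average of such $\tau_i$'s.

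The base case $s=1$ is exactly the first stage of Theorem~\ref{th:multi}. Lemma~\ref{lm:FPDregion} applies with $\mathbf M^{(1)}_{i,1}$ as $\mathbf M_{l_0}$ and lcrm $\mathbf R^{(1)}_i=\mathbf M^{(1)}_{i,1}\mathbf D^{(1)}_i$, so the robustly determinable range of group $i$ is $\mathcal N(\mathbf R^{(1)}_i)$. Since $\mathbf f^{(1)}_i$ lies in this range and $\tau_i<\delta^{(1)}_i$, Proposition~\ref{robust} recovers all $\mathbf M_{i,j}\mathbf n'_{i,j}$ correctly. The averaging step then gives $\|\tilde{\mathbf f}^{(1)}_i-\mathbf f^{(1)}_i\|\le\tau_i$. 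Singleton groups are trivial.

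For the inductive step from $s-1$ to $s$, with $s\le S-1$, take a group $\mathcal M^{(s)}_k$. Its moduli are some $\mathbf R^{(s-1)}_m$, and its erroneous remainders are the $\tilde{\mathbf f}^{(s-1)}_m$. Each has error at most $\max\{\tau_i: k\in\phi^{(s)}(i)\}$ by the induction hypothesis. Each of these $\tau_i$ is $<\delta^{(s)}_k$ by the hypothesis \eqref{eq:thk-stage}, because $k\in\phi^{(s)}(i)$ places $\delta^{(s)}_k$ inside the minimum defining $\tau_i$'s bound. The true input $\mathbf f^{(s)}_k$ is the remainder of $\mathbf f$ modulo $\mathbf R^{(s)}_k$. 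The key observation is that $\mathbf f^{(s-1)}_m\equiv\mathbf f\equiv \mathbf f^{(s)}_k \pmod{\mathbf R^{(s-1)}_m}$, since $\mathbf R^{(s)}_k$ is a right multiple of $\mathbf R^{(s-1)}_m$. So $\mathbf f^{(s)}_k$ satisfies the stage-$s$ congruence system with the same remainders. Lemma~\ref{lm:FPDregion} again makes the robustly determinable range equal to $\mathcal N(\mathbf R^{(s)}_k)$. Proposition~\ref{robust} and averaging then give the claim at stage $s$.

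The final stage $S$ is a single application of Proposition~\ref{robust} to $\{\mathbf R^{(S-1)}_k\}$. We use the given range hypothesis on $\mathbf f$ and the bound $\tau_i<\delta$. The error bound $\frac1{K_1}\sum\tau_i$ should come from tracking the averages: each final estimate is an average of averages of the original remainder errors. I expect this bookkeeping to be the main obstacle. With overlapping groups or uneven group sizes, a nested average is not literally the uniform average over $i$. So I would first try to prove the bound in a weighted form and then bound it by $\max_i\tau_i$. If the uniform bound $\frac1{K_1}\sum_i\tau_i$ is to be obtained exactly, I would need to assume, or read into the path structure $\phi^{(s)}$, that each initial group contributes equally. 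Mirroring Theorem~\ref{th:multi}'s final display, I would present the error as $\frac1{K_{S-1}}\sum_k\|\tilde{\mathbf f}^{(S-1)}_k-\mathbf f^{(S-1)}_k\|$ and unwind it to the stated bound.
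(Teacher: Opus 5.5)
Your stage-by-stage induction is the argument the paper intends; it omits the proof as ``the same argument as in the two-stage case''. The reconstruction part of your proposal is correct:
Lemma~\ref{lm:FPDregion} makes each intermediate robustly determinable range equal to $\mathcal N(\mathbf R^{(s)}_k)$.
The congruence $\mathbf f^{(s)}_k\equiv\mathbf f^{(s-1)}_m \pmod{\mathbf R^{(s-1)}_m}$ holds because $\mathbf R^{(s)}_k$ is a right multiple of $\mathbf R^{(s-1)}_m$.
Reading $\phi^{(s)}$ transitively puts every input error of group $k$ at stage $s$ below $\delta^{(s)}_k$.

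The obstacle you flag at the end is genuine, but it is a defect of the statement, not of your argument. For $S\ge 3$ the uniform bound $\frac1{K_1}\sum_i\tau_i$ is false in general, so your plan to unwind the final average to it cannot succeed.

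Each stage outputs the plain average of its input errors. So with $\mathbf d_i=\frac{1}{L^{(1)}_i}\sum_j\Delta\mathbf r_{i,j}$, the final error is a convex combination $\sum_i w_i\mathbf d_i$ whose weights need not be uniform. In Example~\ref{ex:three-vs-two}, $\mathbf R^{(1)}_3$ enters both stage-2 groups, and the final error equals $\frac16(\mathbf d_1+\mathbf d_2+\mathbf d_4+\mathbf d_5)+\frac13\mathbf d_3$.

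Now set every remainder error to zero except $\Delta\mathbf r_{3,j}=[3,0]^{\top}$ for $j=1,2,3$:
\begin{itemize}
\item This gives $\tau_3=3<\min\{\delta^{(1)}_3,\delta^{(2)}_1,\delta^{(2)}_2,\delta\}=4$ and $\tau_i=0$ otherwise, so \eqref{eq:thk-stage} holds.
\item Every stage reconstructs correctly; the two stage-2 outputs carry error $[1,0]^{\top}$, below $\delta=324$.
\item The result is $\|\tilde{\mathbf f}-\mathbf f\|_2=1>3/5=\frac15\sum_i\tau_i$.
\end{itemize}
Unequal group sizes break the bound even without overlap: stage-2 groups $\{\mathbf R^{(1)}_1,\mathbf R^{(1)}_2\}$ and $\{\mathbf R^{(1)}_3\}$ give weights $\frac14,\frac14,\frac12$.

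What your bookkeeping actually proves, and what should replace the last display, is
\[
\|\tilde{\mathbf f}-\mathbf f\|\le\sum_{i=1}^{K_1}w_i\tau_i\le\max_i\tau_i,\qquad
w_i=\sum_{(k_2,\ldots,k_{S-1})}\frac{1}{K_{S-1}}\prod_{s=2}^{S-1}\frac{1}{L^{(s)}_{k_s}}.
\]
Here the sum runs over all chains with $k_1=i$ and $\mathbf R^{(s-1)}_{k_{s-1}}\in\mathcal M^{(s)}_{k_s}$. The $w_i$ are nonnegative and sum to $1$.

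This coincides with the stated $\frac1{K_1}\sum_i\tau_i$ exactly when all $w_i=1/K_1$. That happens, for example, for $S=2$ (Theorem~\ref{th:multi}), or when at every stage $2\le s\le S-1$ each input lies in exactly one group and all groups have the same size.
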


The proof follows exactly the same argument as in the two-stage case, 
hence it is omitted.

To conclude this section, we present an example showing that a 
three-stage robust MD-CRT can achieve a higher vector remainder error bound 
than the two-stage framework.

\begin{example}\label{ex:three-vs-two}
   Let \begin{equation}\notag
   \mathbf{E}_{i1} =\begin{pmatrix}
       e_i &1\\
       0 &e_i\\
   \end{pmatrix}, 
   \mathbf{E}_{i2}=\begin{pmatrix}
       e_i &0\\
       1 &e_i\\
   \end{pmatrix},
\end{equation} 
for $1\leq i \leq 5$, where $e_1=8,e_2=16,e_3=36,e_4=9,e_5=27$. Let $\mathbf{\Gamma}_1$ and $\mathbf{\Gamma}_2$ be as in the motivating example and
\begin{equation}\notag
   \mathbf{\Gamma}_{3} =\begin{pmatrix}
       1 &0\\
       53 &769\\
   \end{pmatrix},
   \mathbf{\Gamma}_{4} =\begin{pmatrix}
       26 &-11\\
       11 &26\\
   \end{pmatrix},
   \mathbf{\Gamma}_{5} =\begin{pmatrix}
       26 &11\\
       -11 &26\\
   \end{pmatrix},
\end{equation} 
All $\mathbf{\Gamma}_j$ are pairwise co-prime and $\lambda_{\mathcal{L}(\mathbf{\Gamma_1})}=\lambda_{\mathcal{L}(\mathbf{\Gamma_2})}=\sqrt{773}$,$\lambda_{\mathcal{L}(\mathbf{\Gamma_3})}=\sqrt{842}$, $\lambda_{\mathcal{L}(\mathbf{\Gamma_4})}=\lambda_{\mathcal{L}(\mathbf{\Gamma_5})}=\sqrt{797}$.

Consider the following matrix moduli:
$$\{\mathbf{\Gamma}_k,\mathbf{\Gamma}_k\mathbf{E}_{k1},\mathbf{\Gamma}_k\mathbf{E}_{k2}\ |\ 1 \leq k \leq 5\},$$
Since all $\mathbf{\Gamma}_j$ are pairwise co-prime, the vector remainder error bound of single-stage robust MD-CRT is $1/4$, which means that no error can be tolerated.

We then divide them into five groups:
$$\mathcal{M}_{k}^{(1)}=\{\mathbf{\Gamma}_k,\mathbf{\Gamma}_k\mathbf{E}_{k1},\mathbf{\Gamma}_k\mathbf{E}_{k2}\}, \quad  \text{for} \ 1 \leq k \leq 5,$$
and let the remainder errors of each group $k$ be bounded by $\tau_k$.  
Each $\mathcal M_k^{(1)}$ satisfies Lemma~\ref{lm:FPDregion} with
$\mathbf M_{l_0}=\mathbf\Gamma_k$. Hence, by \eqref{delta_i^s}, $\delta_1^{(1)}=\delta_2^{(1)}=\sqrt{773}/4\approx6.95$, $\delta_3^{(1)}=\sqrt{842}/4\approx7.25$, $\delta_4^{(1)}=\delta_5^{(1)}=\sqrt{797}/4\approx7.05$. 
Moreover, the lcrms satisfying Lemma~\ref{lm:FPDregion} are
\begin{equation}\notag
   \mathbf{R}^{(1)}_{k} =\mathbf{\Gamma}_{k}\begin{pmatrix}
       r_k &0\\
       0 &r_k\\
   \end{pmatrix}, \quad 1\leq k\leq 5,
\end{equation} 
where $r_1=64,r_2=256,r_3=1296,r_4=81,r_5=729$. Then, all these $\mathbf{R}^{(1)}_{k}$ serve the inputs of the second stage. If we use the single-stage robust MD-CRT directly to the second stage, the vector remainder error bound is $\delta_{2-stage}=4$.

Instead, we split the matrix moduli in the second stage into two groups: $\mathcal{M}^{(2)}_1=\{\mathbf{R}^{(1)}_{1},\mathbf{R}^{(1)}_{2},\mathbf{R}^{(1)}_{3}\}$ and $\mathcal{M}^{(2)}_2=\{\mathbf{R}^{(1)}_{3},\mathbf{R}^{(1)}_{4},\mathbf{R}^{(1)}_{5}\}$. Both groups satisfy Lemma~\ref{lm:FPDregion} with
$\mathbf M_{l_0}=\mathbf R^{(1)}_3$. Then, we can calculate vector remainder error bound of each group by \eqref{delta_i^s} and obtain $\delta_1^{(2)}=4$, $\delta_2^{(2)}=81/4=20.25$. 
The resulting lcrms (still satisfying Lemma~\ref{lm:FPDregion}) are
\begin{equation}\notag
   \mathbf{R}^{(2)}_{k} =\mathbf{\Gamma}_{3}\begin{pmatrix}
       r^{\prime}_k &0\\
       0 &r^{\prime}_k\\
   \end{pmatrix}, \quad k=1,2,
\end{equation} 
where $r^{\prime}_1=16028928,r^{\prime}_2=9296208$. Then, all two $\mathbf{R}^{(2)}_{k}$ serve the inputs of the third stage. Use single-stage robust MD-CRT on $\{\mathbf R^{(2)}_1,\mathbf R^{(2)}_2\}$, the vector remainder error bound is $\delta_{3-stage}=324$.
According to the grouping structure, the flow of remainders can be
described as follows:
$\phi^{(2)}(1)=\{1\}$,
$\phi^{(2)}(2)=\{1\}$, 
$\phi^{(2)}(3)=\{1,2\}$,
$\phi^{(2)}(4)=\{2\}$,
$\phi^{(2)}(5)=\{2\}$.

According to \eqref{eq:thk-stage}, we can explicitly compute the 
upper bounds on the remainder error $\tau_k$ for each 
initial group $\mathcal{M}_{k}^{(1)}$. These values are summarized in 
Table~\ref{tab:robustness_comparison}. From the table, it is clear 
that while the single-stage approach allows no error, 
the two-stage framework improves the vector remainder error bound uniformly 
across all groups, and the three-stage framework further enhances 
the bounds for certain groups, specifically groups~4 and~5.

\begin{table}[htbp]
\centering
\caption{Vector remainder error bounds for different groups.}
\label{tab:robustness_comparison}
\begin{tabular}{c|ccccc}
\toprule
\text{Groups} & $\tau_1$ & $\tau_2$ & $\tau_3$ & $\tau_4$ & $\tau_5$ \\
\midrule
Single-stage & 0.25 & 0.25 & 0.25 & 0.25 & 0.25 \\
Two-stage    & 4    & 4    & 4    & 4    & 4    \\
Three-stage  & 4    & 4    & 4    & 7.05 & 7.05 \\
\bottomrule
\end{tabular}
\end{table}
\end{example}

\section{Simulations}\label{s6}

In this section, we present numerical simulations to verify the theoretical results on the robustness advantage of non-diagonal matrix moduli in robust MD-CRT, i.e., non-separable robust MD-CRT,
and the proposed multi-stage robust MD-CRT framework developed in Theorem~\ref{th:multi}.

We first consider the four matrix moduli in Example~\ref{ex:2-stage}, namely $30\mathbf{M}$, $10\mathbf{M}\mathbf{C}_1$, $15\mathbf{M}\mathbf{C}_2$, and $42\mathbf{M}\mathbf{C}_3$.
As discussed in Example~\ref{ex:2-stage}, the vector remainder error bound of the single-stage scheme is $3\lambda_{\mathcal{L}(\mathbf{M})}/2$, while that of the two-stage scheme is $5\lambda_{\mathcal{L}(\mathbf{M})}/2$.

Assume the determinant constraint of $\mathbf{M}$ is $881$. Under this constraint, one of the diagonal matrices achieving the maximum shortest vector length among diagonal integer matrices is 
\[
\mathbf{M}=\begin{pmatrix}
29 & 0\\
0 & 29
\end{pmatrix},
\]
with $\lambda_{\mathcal{L}(\mathbf{M})}=29$.
According to Algorithm~\ref{al1}, one of the non-diagonal matrices achieving the maximum shortest vector length among integer matrices with determinant $881$ is  
\[
\mathbf{M}^{\prime}=\begin{pmatrix}
1 & 0\\
32 & 881
\end{pmatrix},
\]
with $\lambda_{\mathcal{L}(\mathbf{M}^{\prime})}=\sqrt{1009}\approx31.76$.

For both matrices, a fixed integer vector $\mathbf{f}$ is chosen
within the robustly determinable range.  
For each vector remainder error bound $\tau=5,10,\ldots,85$, error vectors $\Delta\mathbf{r}_i$ satisfying
$\|\Delta\mathbf{r}_i\|_2\le\tau$ are generated uniformly and independently within
$\{\mathbf{e}\in\mathbb{Z}^2:\ \|\mathbf{e}\|_2\le\tau\}$ for $i=1,2,3,4$. 
Each case is averaged over $2000$ independent trials, and the mean reconstruction error  $\mathbb{E}[\|\mathbf{f}-\tilde{\mathbf{f}}\|_2]$ is plotted versus~$\tau$ in Fig.~\ref{fig:sim1}.
As shown in the figure, the single-stage system with the diagonal matrix
loses robustness when $\tau$ exceeds 45.
In contrast, the non-diagonal matrix maintains robust reconstruction
for larger~$\tau$, which agrees with the theory that non-diagonal matrices generate lattices with longer shortest vectors and hence larger tolerable vector remainder errors.

For the non-diagonal case, the two-stage reconstruction strategy
of Example~\ref{ex:2-stage} is applied using the same setting. The results, also plotted in Fig.~\ref{fig:sim1}, show that the two-stage scheme further enlarges the robust region of the vector remainder error bound. Even when the single-stage system fails, the two-stage approach keeps the reconstruction error small,
demonstrating a clear robustness improvement. The theoretical vector remainder error bounds obtained directly from~(\ref{tau}) and~(\ref{eq:th2-stage}) are $\tau_{\mathrm{diag}} = 43.5$, 
  $\tau_{\mathrm{non\text{-}diag}} = 47.6$, and 
  $\tau_{\mathrm{two\text{-}stage}} = 79.4$, which are marked in Fig.~\ref{fig:sim1}. The reason why the simulation performances for the robust MD-CRTs are slightly better than the theoretical vector remainder error bounds is that these theoretical bounds are only sufficient conditions and behave as lower bounds of the performances. This holds similarly in Fig.~\ref{fig:sim2} below.

Overall, these results confirm that although the single-stage robust MD-CRT with non-diagonal moduli already outperforms the diagonal case, the proposed multi-stage framework can further enhance the vector remainder error bound.

\begin{figure}[t]
  \centering
  \includegraphics[width=\linewidth]{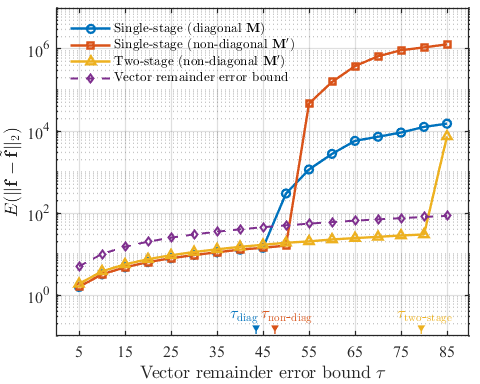}
  \caption{
  Mean reconstruction error versus vector remainder error bound~$\tau$
  for diagonal, non-diagonal, and two-stage systems.}
  \label{fig:sim1}
\end{figure}

We next test the six matrix moduli introduced in the motivating example at the beginning of Section~\ref{s5}. Both single-stage and two-stage reconstructions are performed to further verify the robustness improvement.

The vector remainder error bound is set as $\tau=1,2,\ldots,10$.
For each~$\tau$, $2000$ random trials are conducted.
In every trial, the remainder errors are independently drawn from a uniform distribution within
$\{\mathbf{e}\in\mathbb{Z}^2:\ \|\mathbf{e}\|_2\le\tau\}$,
and the mean reconstruction error $\mathbb{E}[\|\mathbf{f}-\tilde{\mathbf{f}}\|_2]$ is then computed.

As shown in Fig.~\ref{fig:sim2}, the single-stage reconstruction completely fails for all tested $\tau$ values.
In contrast, the proposed two-stage framework
remains stable for $\tau\le7$.
The theoretical vector remainder error bounds $\tau_{\mathrm{single\text{-}stage}} = 0.25$ and $\tau_{\mathrm{two\text{-}stage}} = 6.95$ are also shown in Fig.~\ref{fig:sim2}.
These results clearly demonstrate that
even when a single-stage robust MD-CRT has no robustness,
the multi-stage framework can successfully achieve
robust reconstruction.

\begin{figure}[t]
  \centering
  \includegraphics[width=\linewidth]{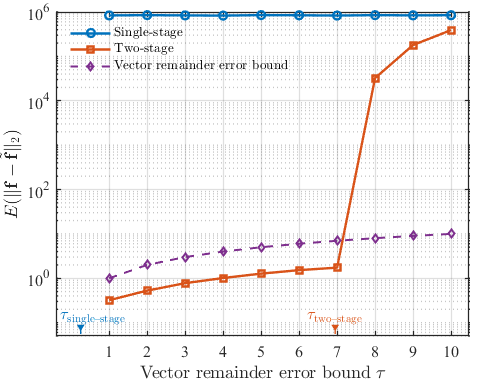}
  \caption{
  Mean reconstruction error versus vector remainder error bound~$\tau$
  for the single-stage and two-stage systems
  using the six matrix moduli in Section~\ref{s5}.}
  \label{fig:sim2}
\end{figure}

\section{Conclusion}\label{s7}
This paper has investigated the MD-CRT for general matrix moduli from the perspectives of dynamic range and robustness.
We showed that under a fixed determinant constraint, i.e., within an upper bound of the sampling rates, non-diagonal matrix moduli do not increase the dynamic range but lead to more balanced and better-conditioned sampling patterns.
They also generate lattices with longer shortest vectors, thus providing stronger robustness to vector remainder errors than that of diagonal matrix moduli, i.e., non-separable robust MD-CRT performs strictly better than separable robust MD-CRT.
To further improve the robustness of the robust MD-CRT, we then further developed a multi-stage robust MD-CRT framework that improves the robustness level without reducing the dynamic range.
Both theoretical analysis and numerical simulations confirmed that the proposed approach achieves higher error tolerance and more reliable reconstruction in multidimensional settings.


\end{document}